\newtheorem{theorem}{Theorem}
\newtheorem{corollary}{Corollary}
\newtheorem{remark}{Remark}
\newtheorem{definition}{Definition}
\newtheorem{proposition}{Proposition}
\newtheorem{lemma}{Lemma}
\newtheorem{conjecture}{Conjecture}
\newenvironment{proof}[1][Proof]{\noindent\textbf{#1.} }{\ \rule{0.5em}{0.5em}}
\newcommand{\Ind}{\mathds{1}}
\newcommand{\Ber}{\mathrm{Bernoulli}}
\newcommand{\m}{\mathcal}
\newcommand{\p}[1]{\left(#1\right)}
\newcommand{\pp}[1]{\left[#1\right]}
\newcommand{\ppp}[1]{\left\{#1\right\}}
\newcommand{\abs}[1]{\left|#1\right|}
\begin{document}

\title{How to Quantize $n$ Outputs of a Binary Symmetric Channel to $n-1$ Bits?}

\author{\IEEEauthorblockN{Wasim Huleihel}
\IEEEauthorblockA{MIT\\
wasimh@mit.edu}
\and
\IEEEauthorblockN{Or Ordentlich}
\IEEEauthorblockA{MIT\\
ordent@mit.edu}
\IEEEoverridecommandlockouts
\IEEEcompsocitemizethanks{
\IEEEcompsocthanksitem
The work of W. Huleihel and O. Ordentlich was supported by the MIT - Technion Postdoctoral Fellowship.
}}

\parskip 3pt

\maketitle

\begin{abstract}
Suppose that $Y^n$ is obtained by observing a uniform Bernoulli random vector $X^n$ through a binary symmetric channel with crossover probability $\alpha$. The ``most informative Boolean function'' conjecture postulates that the maximal mutual information between $Y^n$ and any Boolean function $\mathrm{b}(X^n)$ is attained by a dictator function. In this paper, we consider the ``complementary" case in which the Boolean function is replaced by $f:\ppp{0,1}^n\to\ppp{0,1}^{n-1}$, namely, an $n-1$ bit quantizer, and show that $I(f(X^n);Y^n)\leq (n-1)\cdot\left(1-h(\alpha)\right)$ for any such $f$. Thus, in this case, the optimal function is of the form $f(x^n)=(x_1,\ldots,x_{n-1})$.
\end{abstract}

\section{Introduction}\label{sec:intro}

Let $X^n$ be an $n$-dimensional binary vector uniformly distributed over $\ppp{0,1}^n$, and let $Y^n$ be the output of passing $X^n$ through a binary symmetric channel (BSC) with crossover probability $\alpha\in[0,1/2]$. In other words, $Y^n = X^n\oplus Z^n$, where $Z^n$ is a sequence of $n$ independent and identically distributed (i.i.d.) $\mathrm{Bernoulli}(\alpha)$ random variables, statistically independent of $X^n$. The following conjecture~\cite{ck14} have recently received considerable attention.
\begin{conjecture}\label{conj:1}
For any \emph{Boolean} function $\mathrm{b}:\ppp{0,1}^n\to\ppp{0,1}$, we have $I(\mathrm{b}(X^n);Y^n)\leq1-h(\alpha)$, where $h(\alpha)\triangleq -\alpha\log_2\alpha-(1-\alpha)\log_2(1-\alpha)$ is the binary entropy function.
\end{conjecture}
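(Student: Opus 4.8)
The plan is to attack Conjecture~\ref{conj:1} by Fourier analysis on the cube together with hypercontractivity. First I would pass to the $\{\pm1\}$-valued representation $g\triangleq 1-2\mathrm{b}$, so that $g(X^n)\in\{\pm1\}$ and $B\triangleq\mathrm{b}(X^n)=(1-g(X^n))/2$. Writing $\rho\triangleq 1-2\alpha$, the symmetry of the BSC and the uniformity of $X^n$ give $X^n\mid\{Y^n=y\}\sim y\oplus Z^n$, so the posterior mean of $g$ is exactly the noise operator acting on $g$: $\mathbb{E}[g(X^n)\mid Y^n=y]=(T_\rho g)(y)$, where in the Fourier--Walsh basis $T_\rho g=\sum_{S}\rho^{|S|}\hat g(S)\chi_S$. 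Introducing the entropy profile $\phi(t)\triangleq h\big(\tfrac{1+t}{2}\big)$ on $[-1,1]$, the mutual information collapses to $I(B;Y^n)=\phi(\hat g(\emptyset))-\mathbb{E}_Y[\phi((T_\rho g)(Y^n))]$. A dictator $g=\chi_{\{i\}}$ yields $\hat g(\emptyset)=0$ and $T_\rho g=\rho\chi_{\{i\}}$, hence $I=\phi(0)-\phi(\rho)=1-h(\alpha)$; so the target inequality is precisely the assertion that dictators maximize this functional.

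Next I would reduce to balanced $\mathrm{b}$. Since $\phi$ is maximized at $t=0$ and the conjectured optimizer is balanced, I would first treat $\hat g(\emptyset)=0$, where $H(B)=1$ and the claim becomes $\mathbb{E}_Y[\phi((T_\rho g)(Y^n))]\ge h(\alpha)$, and then argue that a nonzero bias only lowers the objective. For the balanced case I would expand $\phi$ in its even Taylor series $\phi(t)=1-\frac{1}{2\ln2}\sum_{k\ge1}\frac{t^{2k}}{k(2k-1)}$, which turns the statement into the moment comparison
\[
\sum_{k\ge1}\frac{1}{k(2k-1)}\Big(\|T_\rho g\|_{2k}^{2k}-\rho^{2k}\Big)\le 0,
\]
i.e.\ a weighted comparison of all even $L^{2k}$-norms of $T_\rho g$ against those of a single dictator. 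The level-$k=1$ term is immediate: by Parseval and $\rho^{2|S|}\le\rho^2$ for $|S|\ge1$, $\|T_\rho g\|_2^2=\sum_{S}\rho^{2|S|}\hat g(S)^2\le \rho^2\sum_{S\ne\emptyset}\hat g(S)^2=\rho^2$ for balanced $g$, with equality only when all Fourier weight sits on level one, already singling out dictators at second order.

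The main obstacle is the higher even moments $\|T_\rho g\|_{2k}^{2k}$ for $k\ge2$. The off-the-shelf tool is hypercontractivity, which gives $\|T_\rho g\|_{2k}\le\|g\|_2=1$ whenever $\rho^2(2k-1)\le1$, but this yields only $\|T_\rho g\|_{2k}^{2k}\le1$, far weaker than the $\rho^{2k}$ one needs, and it is vacuous in the low-noise regime $\rho^2>1/(2k-1)$. Closing this gap is exactly where the conjecture resists proof: one needs a sharpened, level-dependent hypercontractive estimate that exploits the Booleanness of $g$ (the two-point $\{\pm1\}$ structure and FKN-type stability) rather than only $\|g\|_2=1$. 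I would therefore pursue either (i) a refined inequality bounding $\sum_k\frac{1}{k(2k-1)}\|T_\rho g\|_{2k}^{2k}$ directly by its value on level-one functions, or (ii) an induction on $n$ via subadditivity of the functional under the restrictions $X_n\mapsto 0,1$. I expect the inductive step to be the principal difficulty: restricting a Boolean function preserves Booleanness, but the single-coordinate entropy bookkeeping does not obviously telescope, which is why this step, and not the setup, is the genuine crux. Partial resolutions in the high-noise limit $\alpha\to1/2$ (where only the $k=1$ term survives to leading order) and exhaustive checks for small $n$ would serve as sanity tests on any candidate inequality.
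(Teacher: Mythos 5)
You have not proved the statement, and neither does the paper: this is Conjecture~1, which the paper explicitly leaves open (``Conjecture~\ref{conj:1} remains open in general''), citing only Samorodnitsky's resolution of the very noisy regime $\alpha>1/2-\delta$. So there is no paper proof to match your attempt against, and your proposal is correctly self-diagnosed as a plan rather than a proof. Your setup is sound and standard: the identity $\mathbb{E}[g(X^n)\mid Y^n=y]=(T_\rho g)(y)$ with $\rho=1-2\alpha$, the reformulation $I(\mathrm{b}(X^n);Y^n)=\phi(\hat g(\emptyset))-\mathbb{E}_Y[\phi((T_\rho g)(Y^n))]$ with $\phi(t)=h\big(\frac{1+t}{2}\big)$, the even Taylor expansion of $\phi$, and the $k=1$ Parseval step $\|T_\rho g\|_2^2\leq\rho^2$ for balanced $g$ are all correct, and they reproduce the known Fourier-analytic reduction from the literature on this conjecture. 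But the weighted comparison $\sum_{k\geq1}\frac{1}{k(2k-1)}\big(\|T_\rho g\|_{2k}^{2k}-\rho^{2k}\big)\leq 0$ is, as you say, exactly the open problem; plain hypercontractivity gives only $\|T_\rho g\|_{2k}^{2k}\leq 1$ in a restricted noise range, and no level-dependent sharpening of the kind you invoke in step (i) is known. Item (ii), the induction on $n$, is likewise a known dead end in the form stated: restrictions preserve Booleanness but the mutual-information functional is not known to be subadditive in the way the telescoping would require.

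One additional gap you gloss over: the reduction to balanced $\mathrm{b}$ is itself unproved. A nonzero bias lowers $\phi(\hat g(\emptyset))=H(\mathrm{b}(X^n))$, but it also lowers the subtracted term $\mathbb{E}_Y[\phi((T_\rho g)(Y^n))]$, so it does not follow that ``a nonzero bias only lowers the objective''; handling unbalanced functions is a recognized difficulty in this problem, not a routine preliminary. Your closing remarks are consistent with the state of the art --- in the limit $\alpha\to1/2$ only the $k=1$ term survives to leading order, which is morally why the very noisy case is the one that has been settled --- but as a referee I must record that the proposal establishes the conjecture in no new regime. Note also that the paper's actual contribution (Theorem~\ref{thm:main}) concerns the complementary $(n-1)$-bit quantizer and is proved by entirely different, non-Fourier means: a counting identity for the partition sizes (Proposition~\ref{prop:lambda1}), a shifting argument reducing to monotone sets (Lemma~\ref{lem:monsuffice}), and explicit evaluation of $H^n_m(\alpha)$ for small $m$; none of that machinery bears on Conjecture~\ref{conj:1} itself.
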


Since the dictator function $\mathrm{b}(X^n)=X_i$ (for any $1\leq i\leq n$) achieves this upper bound with equality, then intuitively Conjecture~\ref{conj:1} postulates that the dictator function is the most ``informative" one-bit quantization of $X^n$ in terms of achieving the maximal $I(\mathrm{b}(X^n);Y^n)$. Clearly, by the symmetry of the pair $(X^n,Y^n)$ we have that for any function $I(\mathrm{b}(X^n);Y^n)=I(X^n;\mathrm{b}(Y^n))$, so we can equivalently think of the problem at hand as seeking the optimal one-bit quantizer of $n$ outputs of the channel.
Despite attempts in various directions \cite{ck14,agkn13,ct14,osw16,kow15,ns16,pmp16}, Conjecture~\ref{conj:1} remains open in general. However, for the ``very noisy'' case, where $\alpha>1/2-\delta$, for some $\delta>0$ independent of $n$, the validity of the conjecture was established by Samorodnitsky~\cite{Samorodnitsky16}.

In this paper, we consider the ``complementary" case in which the Boolean function in Conjecture~\ref{conj:1} is replaced by an $n-1$ bit quantizer. Our main result is the following.

\begin{theorem}
For any function $f:\ppp{0,1}^n\to\ppp{0,1}^{n-1}$ we have
\begin{align}
I(f(X^n);Y^n)\leq (n-1)\cdot\left(1-h(\alpha)\right),\label{ToShow}
\end{align}
and this bound is attained with equality by, e.g., $f(x^n)=(x_1,\ldots,x_{n-1})$.
\label{thm:main}
\end{theorem}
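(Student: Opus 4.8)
The plan is to recast the upper bound as a lower bound on the information that the quantizer must leave behind, and then to localize this first to individual fibers and ultimately to individual coordinates. Write $U=f(X^n)$; since $U$ takes at most $2^{n-1}$ values, $H(U)\le n-1$. Because $Y^n$ is uniform and the channel is memoryless, $H(Y^n)=n$ and $I(X^n;Y^n)=n(1-h(\alpha))$. As $U$ is a deterministic function of $X^n$, the chain rule gives $I(X^n;Y^n)=I(U;Y^n)+I(X^n;Y^n\mid U)$, so \eqref{ToShow} is exactly equivalent to $I(X^n;Y^n\mid U)\ge 1-h(\alpha)$. Conditioned on $U=u$, the vector $X^n$ is uniform over the fiber $f^{-1}(u)$ and $Y^n=X^n\oplus Z^n$ with $Z^n$ still i.i.d.\ $\mathrm{Bernoulli}(\alpha)$. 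Hence everything reduces to the self-contained claim that, for $X^n$ uniform over any $S\subseteq\{0,1\}^n$,
\[ I(X^n;Y^n)\ \ge\ (1-h(\alpha))\,H(X^n),\qquad H(X^n)=\log_2|S|. \]
Granting this ``flat-input lemma'' and averaging over $u$ gives $I(X^n;Y^n\mid U)\ge(1-h(\alpha))H(X^n\mid U)=(1-h(\alpha))(n-H(U))\ge 1-h(\alpha)$, which closes the argument; note the lemma is tight for coordinate subcubes, which is precisely where $f(x^n)=(x_1,\dots,x_{n-1})$ lives, so no slack is lost.

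To prove the flat-input lemma I would induct on $n$ by peeling one coordinate. Fixing a coordinate $i$ and conditioning on $X_i$, each conditional restricts $X^n$ to a flat distribution over a subset of $\{0,1\}^{n-1}$, so the induction hypothesis yields $H(Y^n\mid X_i)\ge n\,h(\alpha)+(1-h(\alpha))(H(X^n)-H(X_i))$. Combining this with $H(Y^n)=H(Y^n\mid X_i)+I(X_i;Y^n)$, the inductive step closes as soon as there is a coordinate obeying the single-bit inequality $I(X_i;Y^n)\ge(1-h(\alpha))H(X_i)$, which I will call (SL).

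The inequality (SL) is where the real work lies, and flatness supplies one exact simplification. Conditioned on the remaining coordinates $X_{-i}$, flatness forces $X_i$ to be either deterministic or a fair coin, and the only output informative about $X_i$ is $Y_i$; therefore $H(X_i\mid Y^n,X_{-i})=h(\alpha)\,H(X_i\mid X_{-i})$ holds \emph{exactly}. Substituting this into $H(X_i\mid Y^n)=H(X_i\mid Y^n,X_{-i})+I(X_i;X_{-i}\mid Y^n)$ shows that (SL) is equivalent to the contraction estimate $I(X_i;X_{-i}\mid Y^n)\le h(\alpha)\,I(X_i;X_{-i})$: observing the channel outputs must shrink the correlation between $X_i$ and the rest by at least the factor $h(\alpha)$.

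I expect this contraction estimate to be the main obstacle. It is a strong-data-processing-type bound, but with the constant $h(\alpha)$ rather than the familiar BSC mutual-information contraction coefficient $(1-2\alpha)^2$, and it genuinely relies on flatness: when $\Pr(X_i=1)\neq 1/2$, flatness prevents $X_i$ from being independent of $X_{-i}$, so that $Y_{-i}$ must help decode $X_i$ and compensate for the fact that $I(X_i;Y_i)$ alone can fall strictly short of $(1-h(\alpha))H(X_i)$ when $\Pr(X_i=1)$ is small. Equality holds exactly in the independent fair-coin case, i.e.\ for coordinate subcubes, matching the tightness of the lemma. The two routes I would pursue are a direct convexity analysis of the posterior $\Pr(X_i=1\mid Y^n)$ as $X_{-i}$ is integrated out, and a tensorization over the coordinates of $X_{-i}$ that reduces the contraction estimate to a two-point inequality; establishing either, and hence (SL), completes the proof.
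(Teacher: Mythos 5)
Your reduction is sound as far as it goes---writing the theorem as $I(X^n;Y^n\mid U)\geq 1-h(\alpha)$ and localizing to fibers is in fact how the paper begins (it writes $H(Y^n|f(X^n))$ as an average of $H(U_{f^{-1}(j)}\oplus Z^n)$ over fibers)---but your central ``flat-input lemma'' is \emph{false}, and therefore so are (SL) and the contraction estimate you reduce it to; no amount of work on your two proposed routes can establish them. The lemma is equivalent to the claim that $H(U_S\oplus Z^n)\geq n\,h(\alpha)+(1-h(\alpha))\log_2|S|$ for every $S\subseteq\{0,1\}^n$. Take $S$ to be the Hamming sphere of radius $r=\lfloor\rho n\rfloor$ for fixed $\rho\in(0,1/2)$. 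Each coordinate of $U_S\oplus Z^n$ is $\Ber\!\left(\tfrac{r}{n}\ast\alpha\right)$, so by subadditivity $H(U_S\oplus Z^n)\leq n\,h\!\left(\tfrac{r}{n}\ast\alpha\right)$, while $\log_2|S|=\log_2\binom{n}{r}\geq n\,h\!\left(\tfrac{r}{n}\right)-\log_2(n+1)$. But the strict convexity underlying Mrs.\ Gerber's Lemma gives $h(\rho\ast\alpha)<h(\alpha)+(1-h(\alpha))h(\rho)$ for all $\rho,\alpha\in(0,1/2)$, so for large $n$ your required bound fails by an amount linear in $n$. Equivalently, your own averaging argument shows the lemma would imply $I(f(X^n);Y^n)\leq(1-h(\alpha))H(f(X^n))\leq k(1-h(\alpha))$ for \emph{every} $f:\{0,1\}^n\to\{0,1\}^k$; the paper's introduction explicitly notes this is false for $k=Rn$, $0<R<1$, where remote source coding under log-loss achieves strictly more than $k(1-h(\alpha))$. (Your exact identity $H(X_i|Y^n,X_{-i})=h(\alpha)H(X_i|X_{-i})$ for flat inputs is correct; it is the contraction step $I(X_i;X_{-i}|Y^n)\leq h(\alpha)I(X_i;X_{-i})$ that breaks down. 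Since sub-fibers of spheres are spheres, at a minimal-dimension counterexample (SL) must fail at every coordinate, so the induction cannot be rescued by choosing the coordinate cleverly.)

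The theorem survives at $k=n-1$ not because every fiber obeys the proportional bound $(1-h(\alpha))\log_2 m$, but because with $2^{n-1}$ fibers the \emph{average} fiber size is $2$, and the proportional bound is only correct for very small sets. This is precisely the structure of the paper's proof: it verifies $H_m^n(\alpha)\geq n\,h(\alpha)+(1-h(\alpha))\log_2 m$ only for $m\leq 4$ (where the extremal sets, found via shifting/monotonization, are subcube-like; for $m=1,2,4$ it holds with equality), and it never claims anything of proportional strength for large $m$. Instead it uses the counting identity of Proposition~\ref{prop:lambda1}---each fiber of size $m>2$ is offset by $m-2$ singleton fibers---together with the much weaker interpolation inequality of Lemma~\ref{lem:H2bound}, $\tfrac{m-2}{2m-2}H_1^n(\alpha)+\tfrac{m}{2m-2}H_m^n(\alpha)\geq H_2^n(\alpha)$, whose content for large $m$ is roughly $H_m^n\geq 2+(n-2)h(\alpha)$, far below your $n h(\alpha)+(1-h(\alpha))\log_2 m$. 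Any repair of your outline must similarly replace the per-fiber proportional bound with one that degrades for large $m$ and then use the fiber-size distribution constraints to absorb the loss---which is, in essence, the paper's argument.
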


One may wonder whether for any $f:\ppp{0,1}^n\to\ppp{0,1}^{k}$ we have $I(f(X^n);Y^n)\leq k\cdot\left(1-h(\alpha)\right)$. However, for $k=Rn$ with $0<R<1$, the problem essentially reduces to remote source coding under log-loss distortion measure, for which the maximal value of $I(f(X^n);Y^n)/n$ (as a function of $R,\alpha$) can be determined up to $o(n)$ terms. Indeed, \cite{ErkipCover,ct14} characterizes this quantity which turns out to be greater than $R\cdot(1-h(\alpha))$. Conjecture~\ref{conj:1} as well as Theorem~\ref{thm:main} deal with the extreme cases of $k=1$ and $k=n-1$, respectively, where neglecting the $o(n)$ terms leads to non-informative characterization of the maximal $I(f(X^n);Y^n)$, and therefore \cite{ct14,ErkipCover} do not suffice.

Theorem~\ref{thm:main} can be generalized to a stronger statement concerning the entire class of binary-input memoryless output-symmetric (BMS) channels.

\begin{definition}[BMS channels]
A memoryless channel with binary input $X$ and output $Y$ is called \emph{binary-input memoryless output-symmetric (BMS)} if there exists a sufficient statistic $g(Y)=(X\oplus Z_T,T)$ for $X$, where $(T,Z_T)$ are statistically independent of $X$, and $Z_T$ is a binary random variable with $\Pr(Z_T=1|T=t)=t$.
\end{definition}

\begin{corollary}[\cite{yp_pc}]
Let $X^n$ be an $n$-dimensional binary vector uniformly distributed over $\ppp{0,1}^n$, and let $Y^n$ be the output of passing $X^n$ through a BMS with capacity $C$. Then for every $f:\ppp{0,1}^n\to\ppp{0,1}^{n-1}$, we have $$I(f(X^n);Y^n)\leq (n-1)\cdot C,$$ and this bound is attained with equality by, e.g., $f(x^n)=(x_1,\ldots,x_{n-1})$.
\label{cor:bms}
\end{corollary}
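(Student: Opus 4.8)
The plan is to mirror and generalize the argument behind Theorem~\ref{thm:main}, replacing the scalar capacity $1-h(\alpha)$ by the BMS capacity $C$. Write $U=f(X^n)$; since $U$ is a deterministic function of $X^n$, the chain rule gives $I(U;Y^n)=I(X^n;Y^n)-I(X^n;Y^n\mid U)$, and feeding a uniform $X^n$ through $n$ independent uses of the BMS yields $I(X^n;Y^n)=nC$. Hence it suffices to prove $I(X^n;Y^n\mid U)\ge C$. Conditioned on $U=u$, the vector $X^n$ is uniform over the cell $A_u\triangleq f^{-1}(u)$, and because $f$ takes at most $2^{n-1}$ values these cells partition $\ppp{0,1}^n$ into at most $2^{n-1}$ parts, so $\sum_u\p{\abs{A_u}-1}=2^n-\abs{\mathrm{Im}\,f}\ge 2^{n-1}$.

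The heart of the proof is a single per-cell estimate. Let $g(A)\triangleq I(X^n;Y^n)$ when $X^n\sim\Unif(A)$ is sent through the BMS. I would establish that $g(A)\ge 2\,\tfrac{\abs{A}-1}{\abs{A}}\,C$ for every $A\subseteq\ppp{0,1}^n$. Granting this, $I(X^n;Y^n\mid U)=\sum_u\tfrac{\abs{A_u}}{2^n}\,g(A_u)\ge \tfrac{2C}{2^n}\sum_u\p{\abs{A_u}-1}\ge \tfrac{2C}{2^n}\,2^{n-1}=C$, which is exactly the required bound; equality forces every cell to have size two, which is precisely $f(x^n)=(x_1,\ldots,x_{n-1})$.

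I would prove the per-cell estimate by induction on $m=\abs{A}$, the case $m=1$ being trivial. For $m\ge 2$, pick a coordinate $i$ on which $A$ is non-constant and split $A=A_i^0\sqcup A_i^1$ with $m_b=\abs{A_i^b}\ge 1$. By the chain rule and memorylessness (given $X_i$, the output $Y_i$ is independent of $(X_{-i},Y_{-i})$), $g(A)\ge I(X_i;Y_i)+\tfrac{m_0}{m}\,g(A_i^0)+\tfrac{m_1}{m}\,g(A_i^1)$, where each $g(A_i^b)$ is the same functional for $A_i^b$ viewed in the remaining $n-1$ coordinates. The induction hypothesis converts the two cell terms into $\tfrac{2C}{m}\p{(m_0-1)+(m_1-1)}=\tfrac{2C}{m}(m-2)$, so it remains to show $I(X_i;Y_i)\ge \tfrac{2C}{m}$. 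With $p_i=\Pr(X_i=1)=m_1/m$, the single-letter map $p\mapsto I(X;Y)$ for a BMS is concave, vanishes at $p\in\ppp{0,1}$, and equals $C$ at $p=\nicefrac12$ (the uniform input is capacity-achieving); lying above its chords on $[0,\nicefrac12]$ and $[\nicefrac12,1]$ then gives $I(X_i;Y_i)\ge 2\min(p_i,1-p_i)\,C\ge \tfrac{2C}{m}$, since $\min(m_0,m_1)\ge 1$. Adding the two contributions yields $g(A)\ge \tfrac{2C}{m}(m-1)$ and closes the induction.

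The main obstacle is this per-cell lemma — informally, that merging points can only be ``paid for'' at a rate of at least $C$ per lost coordinate — and within it the final single-letter inequality, whose only channel-specific ingredients (concavity of $p\mapsto I(X;Y)$, its vanishing at deterministic inputs, and optimality of the uniform input) hold verbatim for every BMS, which is exactly why the BSC argument extends. I would explicitly avoid the tempting shortcut of conditioning on the revealed crossover sequence $T^n$ and invoking Theorem~\ref{thm:main} per realization: that route only produces $I(U;Y^n)\le nC-\Expt\pp{\min_i\p{1-h(T_i)}}$, and since $\Expt\pp{\min_i\p{1-h(T_i)}}$ can be far smaller than $C$ (the weakest of $n$ independent sub-channels is typically very noisy), it fails to deliver $(n-1)C$. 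The mixture of BSCs must therefore be handled inside the induction rather than after it.
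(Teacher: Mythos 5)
Your proof is correct, and it takes a genuinely different route from the paper's. The paper opens with the same decomposition $I(f(X^n);Y^n)=nC-I(X^n;Y^n\mid f(X^n))$, but then reduces everything to the BSC: since the BSC is the least capable BMS channel of a given capacity, each conditional term for $W$ dominates the corresponding BSC term, and the corollary follows by quoting Theorem~\ref{thm:main}. You never invoke Theorem~\ref{thm:main}; instead your per-cell estimate $g(A)\geq\frac{2\p{\abs{A}-1}}{\abs{A}}C$, proved by induction on $\abs{A}$ from memorylessness, concavity of mutual information in the input law, and optimality of the uniform input, does all the work, and each step of that induction checks out (the splitting identity, the chord bound $I(X_i;Y_i)\geq 2\min(p_i,1-p_i)C$, and the aggregation $\sum_u\p{\abs{A_u}-1}\geq 2^{n-1}$ are all valid). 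Note that for the BSC your estimate is exactly the paper's Lemma~\ref{lem:H2bound} restated in mutual-information form (write $H(U_S\oplus Z^n)=nh(\alpha)+g(S)$ and rearrange), so your induction replaces the paper's entire machinery of shifting, monotone sets, and the explicit computations of $H_3^n(\alpha)$ and $H_4^n(\alpha)$; in particular, your argument subsumes Theorem~\ref{thm:main} itself as the special case of the BSC, yielding a simpler and more general proof of the paper's main result. Two harmless inaccuracies: the parenthetical claim that equality ``is precisely'' $f(x^n)=(x_1,\ldots,x_{n-1})$ is overstated, since any pairing of the cube along single-coordinate flips achieves equality, and achievability is anyway a one-line computation; and your closing remark dismisses only the $T^n$-conditioning shortcut, overlooking that a different reduction to the BSC --- the least-capable-channel argument --- is in fact the paper's valid shortcut.
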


\begin{proof}[Proof of Corollary~\ref{cor:bms}]
Let $W$ be a BMS channel with capacity $C=1-h(\alpha)$.
Let $Y^n_{W}$ and $Y^n_{\mathrm{BSC}}$ be the outputs corresponding to the channel $W$ and a BSC with crossover probability $\alpha$, respectively, when the input to both channels is $X^n$. Define $\pp{n}\triangleq\ppp{1,2,\ldots,n}$. For any function $f:\{0,1\}^n\to [M]$, we can write
\begin{align}
&I(f(X^n);Y_W^n)=I(f(X^n),X^n;Y_W^n)-I(X^n;Y_W^n|f(X^n))\nonumber\\
&=I(X^n;Y_W^n)+I(f(X^n);Y_W^n|X^n)-I(X^n;Y_W^n|f(X^n))\nonumber\\
&=I(X^n;Y_W^n)\nonumber\\
&\ \ \ -\sum_{m=1}^M \Pr(f(X^n)=m)I(X^n;Y_W^n|f(X^n)=m).\nonumber
\end{align}
We proceed by noting that $I(X^n;Y_W^n)=I(X^n;Y_{\mathrm{BSC}}^n)=nC$ as the capacity achieving input distribution of both channels is $\Ber(1/2)$. Furthermore, recall the fact that the BSC is the least capable among all BMS channels with the same capacity~\cite[page 116]{csiszarkorner},\cite[Lemma 7.1]{SasogluPhD}. To wit, for any input $U^n$, the corresponding outputs of $W$ and the BSC will satisfy $I(U^n;Y^n_{\mathrm{BSC}})\leq I(U^n;Y_W^n)$. This implies that 
\begin{align}
I(X^n;Y_W^n|f(X^n)=m)\geq I(X^n;Y_{\mathrm{BSC}}^n|f(X^n)=m),\nonumber
\end{align}
for all $m=1,\ldots,M$. Thus, we get that for any function $f$, 
\begin{align}
I(f(X^n);Y_W^n)\leq I(f(X^n);Y_{\mathrm{BSC}}^n).
\end{align}
The corollary now follows by invoking Theorem~\ref{thm:main}.
\end{proof}

\section{Proof of Theorem~\ref{thm:main}}\label{sec:proof}

Since the vector $Y^n$ is uniformly distributed over $\ppp{0,1}^n$, we have
\begin{align}
I(f(X^n);Y^n)&=n-H(Y^n|f(X^n)).\label{eq:MI}
\end{align}
Our goal is therefore to lower bound $H(Y^n|f(X^n))$.

Consider the function $f:\{0,1\}^n\to[2^{n-1}]$, and define the sets
\begin{align*}
f^{-1}(j)\triangleq\ppp{x^n\in\ppp{0,1}^n:\;f(x^n)=j}, \ j=1,\ldots,2^{n-1},
\end{align*}
which form a disjoint partition of $\{0,1\}^n$. Further, define the sizes of these sets as
\begin{align*}
m_j\triangleq \abs{f^{-1}(j)} = \sum_{x^n\in\ppp{0,1}^n}\Ind\ppp{f(x^n)=j}, \ j=1,\ldots,2^{n-1},
\end{align*}
and assume without loss of generality that $m_j>0$, for all $j$. To see why this assumption is valid, first note that there must exist some $i$, for which $m_i\geq 2$. Let $f^{-1}(i)=\{x_{i_1}^n,\ldots,x_{i_{m_i}}^n\}$. Now if there exists some $j\neq i$, such that $m_j=0$, we can define a new function $\tilde{f}:\{0,1\}^n\to[2^{n-1}]$ where $\tilde{f}^{-1}(i)=\{x_{i_1}^n,\ldots,x_{i_{m_i-1}}^n\}$, $\tilde{f}^{-1}(j)=\{x_{i_{m_i}}^n\}$, and $\tilde{f}^{-1}(t)=f^{-1}(t)$, for all $t\neq i,j$. For this function we must have
\begin{align}
H\left(Y^n|f(X^n)\right)&\geq H\left(Y^n|f(X^n),\Ind\ppp{X^n=x_{i_{m_i}}}\right)\nonumber\\
&=H(Y^n|\tilde{f}(X^n)),\nonumber
\end{align}
and consequently $I(f(X^n);Y^n)\leq I(\tilde{f}(X^n);Y^n)$.

Next, for every $m=0,1,\ldots, 2^n$ define the quantity
\begin{align}
\lambda(m)\triangleq\sum_{j=1}^{2^{n-1}}\Ind\ppp{m_j=m},\label{lamdef}
\end{align}
which counts the number of sets $f^{-1}(j)$ with cardinality $m$, in the partition induced by the function $f$.\footnote{In fact, since we have already assumed that $m_j>0$ for all $j$, we have that $\lambda(0)=0$ and $\lambda(m)=0$ for $m>2^n-(2^{n-1}-1)$.} The next proposition expresses $\lambda(1)$ in terms of $\{\lambda(m)\}_{m\geq 2}$.
\begin{proposition}
For any $f:\ppp{0,1}^n\to[2^{n-1}]$ with $m_j>0$ for all $j$, we have that
\begin{align}
\lambda(1)=\sum_{m\geq 3} (m-2)\lambda(m).
\end{align}
\label{prop:lambda1}
\end{proposition}
\vspace{-0.3cm}

Intuitively, this proposition states that since the average size of the sets $f^{-1}(j)$ is $2$, then every set $f^{-1}(j)$ of cardinality $m>2$, must be compensated for by $(m-2)$ sets of cardinality $1$.

\begin{proof}
Using the definition of $\lambda(m)$ in \eqref{lamdef}, and the fact that $\{f^{-1}(j)\}$ forms a disjoint partition of $\ppp{0,1}^n$, we have
\begin{align}
\sum_{m=0}^{2^n}\lambda(m)&=2^{n-1},\label{eq:lambdasum}\\
\sum_{m=0}^{2^n}m\lambda(m)&=2^{n}.\label{eq:mlambdasum}
\end{align}
Multiplying~\eqref{eq:lambdasum} by $2$ and equating it with the left-hand side of~\eqref{eq:mlambdasum}, we get
\begin{align}
\sum_{m=0}^{2^n} 2\lambda(m)=\sum_{m=0}^{2^n}m\lambda(m),\nonumber
\end{align}
which implies
\begin{align}
2\lambda(0)+\lambda(1)=\sum_{m\geq 3}(m-2)\lambda(m).\nonumber
\end{align}
Invoking our assumption that $\lambda(0)=0$ gives the desired result.
\end{proof}

\begin{definition}[Minimal entropy of a noisy subset]
For a family of vectors $S\subset \ppp{0,1}^n$ let $U_S$ be a random vector uniformly distributed over $S$, and let $Z^n$ be a sequence of $n$ i.i.d. $\mathrm{Bernoulli}(\alpha)$ random variables, statistically independent of $U_S$. For $m=1,\ldots,2^n$, we define the quantity
\begin{align}
H_m^n(\alpha)\triangleq \min_{S\subset\ppp{0,1}^n \ : \ |S|=m}H(U_S\oplus Z^n).
\label{eq:Hmdef}
\end{align}
\end{definition}

Some properties of $H_m^n(\alpha)$ will be studied in the next section. In particular, we will prove the following lemma.

\begin{lemma}
For any $2<m<2^n$,
\begin{align}
\frac{m-2}{2m-2}H_1^n(\alpha)+\frac{m}{2m-2}H_m^n(\alpha)\geq H_2^n(\alpha).\label{H2Boundprove}
\end{align}
\label{lem:H2bound}
\end{lemma}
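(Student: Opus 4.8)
The plan is to first reduce the stated convex-combination inequality to a single clean lower bound on $H_m^n(\alpha)$, and then to establish that bound by a coordinate-splitting recursion. Since $\frac{m-2}{2m-2}+\frac{m}{2m-2}=1$, inequality \eqref{H2Boundprove} is equivalent to $(m-2)H_1^n(\alpha)+mH_m^n(\alpha)\ge(2m-2)H_2^n(\alpha)$. I would then insert the two boundary values established in the next section, namely $H_1^n(\alpha)=nh(\alpha)$ (a singleton gives a deterministic shift of $Z^n$) and $H_2^n(\alpha)=nh(\alpha)+(1-h(\alpha))$ (attained by a pair at Hamming distance one). Substituting these and solving for $H_m^n(\alpha)$ turns the target into the uniform lower bound
\[
H_m^n(\alpha)\ \ge\ nh(\alpha)+\frac{2(m-1)}{m}\bigl(1-h(\alpha)\bigr).
\]
Equivalently, writing $H(U_S\oplus Z^n)=H(Z^n)+I(J;U_S\oplus Z^n)=nh(\alpha)+I(J;U_S\oplus Z^n)$ with $J$ the uniform index selecting the element of $S$, it suffices to prove $I(J;U_S\oplus Z^n)\ge \frac{2(m-1)}{m}(1-h(\alpha))$ for every $S$ with $\abs{S}=m$. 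This is the entire content of the lemma, and it is tight at $m=2$.

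To produce such a bound I would split the set along a single coordinate. Fix the minimizing $S$, choose a coordinate $i$, and let $W=(U_S)_i$, with $S_0,S_1$ the two parts and $a=\abs{S_0}$, $b=\abs{S_1}$. Conditioning $V=U_S\oplus Z^n$ on $W$ and using that, given $W$, the $i$-th output coordinate is an independent $\mathrm{Bernoulli}(\alpha)$ while the remaining $n-1$ coordinates form a noisy size-$a$ (resp. size-$b$) subset of $\ppp{0,1}^{n-1}$, one obtains
\[
H_m^n(\alpha)\ \ge\ I(W;V)+h(\alpha)+\frac{a}{m}H_a^{n-1}(\alpha)+\frac{b}{m}H_b^{n-1}(\alpha),
\]
and since the $i$-th coordinate already reveals $W$ through a BSC, $I(W;V)\ge I(W;W\oplus Z_i)=h(\alpha\star\tfrac{b}{m})-h(\alpha)$. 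A balanced split $a=b=m/2$ collapses this to $H_m^n(\alpha)\ge 1+H_{m/2}^{n-1}(\alpha)$, which iterates to the exact subcube value $k+(n-k)h(\alpha)$ when $m=2^k$ and in particular beats the required bound. I would also keep in reserve the elementary superadditivity $mH_m^n(\alpha)\ge aH_a^n(\alpha)+bH_b^n(\alpha)$, immediate from conditioning only on which part $U_S$ lands in; this already yields $H_m^n(\alpha)\ge H_2^n(\alpha)$ and can serve as the base of an induction on $m$.

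The main obstacle is the general value of $m$ together with \emph{unbalanced} splits: for a worst-case $S$ no coordinate need split it evenly, and a set contained in a proper subcube forces the recursion into fewer dimensions before any $(1-h(\alpha))$ credit is earned. The delicate point is therefore to argue that for every $S$ some coordinate (or a suitable average over coordinates) yields a split good enough that the accumulated $\mathrm{Bernoulli}$-convolution terms $h(\alpha\star\tfrac{b}{m})-h(\alpha)$ sum to exactly $\frac{2(m-1)}{m}(1-h(\alpha))$, with equality only at $m=2$. I expect this to be handled by an induction on $m$ that peels off one well-chosen coordinate at a time, using the superadditivity bound to absorb the unbalanced remainder and the structural description of the minimizing sets proved in the next section to guarantee that a usable splitting coordinate exists.
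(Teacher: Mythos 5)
Your proposal stops exactly where you yourself flag the difficulty, and what is deferred there is the entire content of the lemma. The reduction to the uniform bound $H_m^n(\alpha)\geq nh(\alpha)+\frac{2(m-1)}{m}\p{1-h(\alpha)}$ is correct, and the one-coordinate splitting inequality is sound, but the handling of unbalanced splits is left as ``I expect this to be handled by an induction\ldots''. That step cannot be waved through: for $m=3$, every coordinate on which a three-point set is non-constant splits it $2$--$1$, and plugging $a=2$, $b=1$ into your own recursion shows that what you must prove is $h\p{\frac{1}{3}*\alpha}\geq\frac{2}{3}+\frac{1}{3}h(\alpha)$, i.e. exactly the analytic inequality $3h\p{\frac{1}{3}*\alpha}-2-h(\alpha)\geq 0$ on which the paper's own proof of the $m=3$ case turns --- and no such inequality appears anywhere in your argument. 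Two supporting claims are also shaky: superadditivity does \emph{not} yield $H_m^n(\alpha)\geq H_2^n(\alpha)$ for odd $m$ (for $m=3$ it gives only $H_3^n(\alpha)\geq\frac{2H_2^n(\alpha)+H_1^n(\alpha)}{3}$, which is smaller than $H_2^n(\alpha)$ whenever $\alpha<1/2$), and your reduction takes $H_2^n(\alpha)=nh(\alpha)+(1-h(\alpha))$ for granted as a \emph{minimum} over all pairs, which itself requires proof (in the paper it rests on the shifting/monotone-set lemma).

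The gap is genuinely fixable, and more cheaply than your plan suggests: no balanced coordinate and no structural information about minimizers is needed. Strengthen the claim to all sets (not just minimizers) and induct on $n$: for any $S$ with $\abs{S}=m\geq 2$, split along any non-constant coordinate into parts of sizes $a\geq b\geq 1$. The induction hypothesis applied to the two projected sets contributes $\frac{a}{m}\cdot\frac{2(a-1)}{a}+\frac{b}{m}\cdot\frac{2(b-1)}{b}=\frac{2(m-2)}{m}$ units of $(1-h(\alpha))$, so the split coordinate need only supply $h\p{\alpha*\frac{b}{m}}-h(\alpha)\geq\frac{2}{m}\p{1-h(\alpha)}$; this holds for every $b/m\in(0,1/2]$ because $p\mapsto h(\alpha*p)$ is concave on $\pp{0,1/2}$ (a concave function precomposed with an affine map) and the right-hand side is its chord between $p=0$ and $p=1/2$. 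With that single observation your sketch becomes a complete, self-contained proof, genuinely different from the paper's: the paper combines monotonicity of $m\mapsto H_m^n(\alpha)$ with exact evaluations of $H_1^n,H_2^n,H_4^n$ and a lower bound on $H_3^n$, all obtained via the shifting lemma, plus a calculus verification for $m=3$; the completed version of your route avoids shifting entirely, recovers $H_2^n(\alpha)$ as the case $m=2$ rather than assuming it, and replaces the calculus by a convexity argument. As submitted, however, the decisive step is missing, so the proposal is a plan rather than a proof.
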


We can now write
\begin{align}
&H(Y^n|f(X^n))=\sum_{j=1}^{2^{n-1}}\Pr\left(f(X^n)=j\right)H\left(Y^n|f(X^n)=j\right)\nonumber\\
&=\sum_{j=1}^{2^{n-1}}\Pr\left(X^n\in f^{-1}(j)\right)H\left(Y^n|X^n\in f^{-1}(j)\right)\nonumber\\
&=2^{-n}\sum_{j=1}^{2^{n-1}}\abs{f^{-1}(j)}H\left(U_{f^{-1}(j)}\oplus Z^n\right)\nonumber\\
&\geq 2^{-n}\sum_{j=1}^{2^{n-1}}m_j H_{m_j}^n(\alpha)\nonumber\\
&= 2^{-n}\sum_{m=1}^{2^n}m\lambda(m)H_m^n(\alpha)\nonumber
\end{align}
\begin{align}
&=2^{-n}\left(\lambda(1)H_1^n(\alpha)+ 2\lambda(2)H_2^n(\alpha)+\sum_{m\geq 3}^{2^n}m\lambda(m)H_m^n(\alpha)\right)\nonumber\\
&=2^{-n}\bigg(2\lambda(2)H_2^n(\alpha)\nonumber\\
& \ \ \ \ \ \ +\sum_{m\geq 3}^{2^n}(m-2)\lambda(m)H_1^n(\alpha)+m\lambda(m)H_m^n(\alpha)\bigg)\label{eq:lambda1}\\
&=2^{-n}\bigg(2\lambda(2)H_2^n(\alpha)\nonumber\\
& +\sum_{m\geq 3}^{2^n}(2m-2)\lambda(m)\left[\frac{m-2}{2m-2}H_1^n(\alpha)+\frac{m}{2m-2}H_m^n(\alpha)\right]\bigg)\nonumber\\
&\geq H_2^n(\alpha)\cdot 2^{-n}\sum_{m=1}^{2^n} (2m-2)\lambda(m)\label{eq:H2bound}\\
&=H_2^n(\alpha),\label{eq:lambdasums}
\end{align}
where in~\eqref{eq:lambda1} follows from Proposition~\ref{prop:lambda1}, in~\eqref{eq:H2bound} we have used Lemma~\ref{lem:H2bound}, and~\eqref{eq:lambdasums} follows from~\eqref{eq:lambdasum} and~\eqref{eq:mlambdasum}. Proposition~\ref{prop:H2}, stated and proved in the next section, shows that $H_2^n(\alpha)=1+(n-1)h(\alpha)$. Combining this with~\eqref{eq:MI} and~\eqref{eq:lambdasums} establishes the desired result.

\section{Properties of $H_m^n(\alpha)$}\label{sec:properties}

The main goal of this section is to prove Lemma~\ref{lem:H2bound}. To this end, we establish some properties of the function $H_m^n(\alpha)$, which may be of independent interest.

\begin{proposition}[Monotonicity in $m$]
The function $H_m^n(\alpha)$ is monotonically non-decreasing as a function of $m$.
\end{proposition}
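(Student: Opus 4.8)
The plan is to prove the single-step inequality $H_m^n(\alpha)\le H_{m+1}^n(\alpha)$ for every $1\le m\le 2^n-1$; monotonicity then follows by transitivity. The one ingredient I would use is concavity of the Shannon entropy as a functional of the distribution, i.e. $H\p{\sum_i\lambda_i Q_i}\ge\sum_i\lambda_i H(Q_i)$ for any convex combination. It helps to work at the level of output laws: for $x^n\in\ppp{0,1}^n$ write $Q_{x^n}$ for the law of $x^n\oplus Z^n$, and for nonempty $S$ write $P_S$ for the law of $U_S\oplus Z^n$; since $U_S$ is uniform on $S$ and independent of $Z^n$, we have $P_S=\frac{1}{\abs{S}}\sum_{x^n\in S}Q_{x^n}$ and $H_m^n(\alpha)=\min_{\abs{S}=m}H(P_S)$.

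First I would fix a minimizing set $S^\star$ of cardinality $m+1$, so that $H_{m+1}^n(\alpha)=H(P_{S^\star})$. The crux is the symmetrization identity
\begin{align}
\frac{1}{m+1}\sum_{x^n\in S^\star}P_{S^\star\setminus\ppp{x^n}}=P_{S^\star},\nonumber
\end{align}
which says that $P_{S^\star}$ is itself the uniform average of its $m+1$ leave-one-out sub-distributions. This is a one-line counting check: writing $P_{S^\star\setminus\ppp{x^n}}=\frac{1}{m}\sum_{u^n\in S^\star:\,u^n\ne x^n}Q_{u^n}$, each $Q_{u^n}$ appears exactly $m$ times across the double sum, so the left-hand side equals $\frac{1}{m(m+1)}\cdot m\sum_{u^n\in S^\star}Q_{u^n}=P_{S^\star}$.

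Applying concavity to this representation, and noting that every leave-one-out set $S^\star\setminus\ppp{x^n}$ has cardinality exactly $m$, I would conclude
\begin{align}
H_{m+1}^n(\alpha)=H(P_{S^\star})\ge\frac{1}{m+1}\sum_{x^n\in S^\star}H\p{P_{S^\star\setminus\ppp{x^n}}}\ge H_m^n(\alpha),\nonumber
\end{align}
since each entropy in the average is that of a size-$m$ set and is hence at least $H_m^n(\alpha)$.

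I expect the main obstacle, and the reason for symmetrizing, to be that the naive route of deleting one fixed point and splitting $H(P_{S^\star})\ge\frac{m}{m+1}H\p{P_{S^\star\setminus\ppp{x^n}}}+\frac{1}{m+1}H(Q_{x^n})$ fails: here $H(Q_{x^n})=nh(\alpha)$ is a universal \emph{lower} bound on every $H(P_S)$, so this two-term split compares $H_{m+1}^n(\alpha)$ with $H_m^n(\alpha)$ in the wrong direction and never closes. Averaging over all $m+1$ deletions instead exhibits the size-$(m+1)$ minimizer as a genuine mixture of size-$m$ output laws, after which plain concavity suffices.
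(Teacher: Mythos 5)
Your proof is correct and is essentially the same as the paper's: both rest on the leave-one-out symmetrization of a minimizing set of size $m+1$ into its $m+1$ subsets of size $m$. The paper phrases the final step probabilistically---introducing a uniform index $A$ with $U_{S}$ generated by first drawing $A$ and then drawing uniformly from $S_{-A}$, and invoking $H(U_S\oplus Z^n)\geq H(U_S\oplus Z^n|A)$---which is exactly your concavity-of-entropy inequality applied to the same mixture representation.
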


\begin{proof}
It is suffice to show that for any natural number $1\leq m<2^n$ it holds that $H_{m}^n(\alpha)\leq H_{m+1}^n(\alpha)$. To this end, let $S=\{s_1,\ldots,s_{m+1}\}\subset\ppp{0,1}^n$ be a family of ${m+1}$ vectors, and let $S_{-i}\triangleq S\setminus\{s_i\}$, for $i=1,\ldots,m+1$. Clearly, $|S_{-i}|=m$ for all $i$. Furthermore, the random vector $U_S$ can be generated by first drawing a random variable $A\sim\mathrm{Uniform}([m+1])$ and then drawing a statistically independent random vector uniformly over $S_{-A}$. Thus, for any $S\subset\ppp{0,1}^n$ of size $m+1$ we have that
\begin{align}
H(U_{S}\oplus Z^n)&\geq H\left(U_{S}\oplus Z^n|A\right)\nonumber\\
&=\frac{1}{m+1}\sum_{a=1}^{m+1}H\left(U_{S_{-a}}\oplus Z^n\right)\geq H_m^n(\alpha),\nonumber
\end{align}
and in particular $H_{m}^n(\alpha)\leq H_{m+1}^n(\alpha)$.
\end{proof}

We define the partial order ``$\leq$" on the hypercube $\{0,1\}^n$ as $y\leq x$ iff $y_i\leq x_i$, for all $i=1,\ldots,n$.

\begin{definition}[Monotone sets]
A set $S\subset\{0,1\}^n$ is monotone if $x\in S$ implies $y\in S$, for all $y\leq x$.
\end{definition}

Let $\m{M}_m^n\triangleq\left\{S\subset\ppp{0,1}^n: \ |S|=m, \ S \text{ is monotone}\right\}$. We will prove the following result.
\begin{lemma}[Sufficiency of monotone sets]
\begin{align}
H_m^n(\alpha)=\min_{S\in \m{M}_m^n} H(U_S\oplus Z^n).\nonumber
\end{align}
\label{lem:monsuffice}
\end{lemma}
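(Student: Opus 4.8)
The plan is to prove this by a \emph{down-shifting} (compression) argument: I will exhibit a weight-preserving transformation that turns any minimizing set into a monotone one without increasing $H(U_S\oplus Z^n)$. For each coordinate $i$ define the compression operator $C_i$ that replaces every $x\in S$ with $x_i=1$ and $x\oplus\be_i\notin S$ by $x\oplus\be_i$, leaving all other elements fixed. This is a bijection on $S$, so $\abs{C_i(S)}=\abs{S}=m$, and whenever it acts nontrivially it strictly decreases the nonnegative integer $\sum_{x\in S}\abs{x}$ (total Hamming weight). I would then reduce the lemma to two claims: (a) $H\p{U_{C_i(S)}\oplus Z^n}\le H\p{U_S\oplus Z^n}$ for every $i$, and (b) iterating the $C_i$ terminates at a monotone set of the same size. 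Given (a) and (b), starting from any $S$ attaining $H_m^n(\alpha)$ and compressing to a monotone $S^\star\in\m{M}_m^n$ yields $\min_{S\in\m{M}_m^n}H(U_S\oplus Z^n)\le H(U_{S^\star}\oplus Z^n)\le H_m^n(\alpha)$, while the reverse inequality is immediate since $\m{M}_m^n$ is a subfamily; hence equality.

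The heart of the argument is claim (a), which I would prove by conditioning on the other coordinates. Write $Y^n=U_S\oplus Z^n$ and split it as $(Y_{-i},Y_i)$, so that $H(Y^n)=H(Y_{-i})+H(Y_i\mid Y_{-i})$. First, $C_i$ leaves the multiset of projections $\ppp{x_{-i}:x\in S}$ unchanged, hence the law of $Y_{-i}$ and the term $H(Y_{-i})$ are \emph{invariant} under $C_i$. Second, for each fixed $y_{-i}$ the conditional law of $Y_i$ is Bernoulli with
\begin{align}
\rho(y_{-i})=\frac12+\frac{\beta}{2}\,\frac{N(y_{-i})}{D(y_{-i})},\qquad \beta\triangleq 1-2\alpha,\nonumber
\end{align}
where $D(y_{-i})=\sum_{x\in S}K_{-i}(x_{-i},y_{-i})$ is the (invariant) marginal weight, $K_{-i}$ is the BSC kernel on the $n-1$ coordinates other than $i$, and $N(y_{-i})=\sum_{x\in S}K_{-i}(x_{-i},y_{-i})(2x_i-1)$ is the signed numerator. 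For a profile $v$ occupied on only one side, the contribution to $N$ carries sign $+$ if only $(v,1)\in S$ and sign $-$ if only $(v,0)\in S$, while doubly occupied profiles contribute $0$. Since $C_i$ converts every ``only-$1$'' profile into an ``only-$0$'' profile, it aligns all single-occupancy profiles to the same sign, which can only increase $\abs{N(y_{-i})}$ (as $\abs{a-b}\le a+b$ for nonnegative $a,b$), with $D(y_{-i})$ held fixed. Because $t\mapsto h\p{\tfrac12+t}$ is even and decreasing in $\abs{t}$, the conditional entropy $h(\rho(y_{-i}))$ does not increase for any $y_{-i}$; averaging against the invariant weights $\Pr(Y_{-i}=y_{-i})$ shows $H(Y_i\mid Y_{-i})$ does not increase, proving (a).

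For (b), since each nontrivial $C_i$ strictly lowers $\sum_{x\in S}\abs{x}\ge0$, repeatedly applying $C_1,\dots,C_n$ in rounds must stabilize at some $S^\star$ fixed by every $C_i$. Such a fixed set is monotone: if $x\in S^\star$ and $x_i=1$, invariance under $C_i$ forces $x\oplus\be_i\in S^\star$, and flipping the relevant $1$-coordinates one at a time shows $y\in S^\star$ for all $y\le x$. The main obstacle is precisely the per-coordinate inequality (a): the naive ``mass moves toward the smaller atom'' heuristic for the conditional distribution of $Y_i$ is misleading, and the clean route is to argue at the level of the conditional bias $\abs{N(y_{-i})}/D(y_{-i})$, exploiting that $D$ is invariant while the signed numerator $N$ can only grow in magnitude under the shift, so that monotonicity of the binary entropy function finishes the step.
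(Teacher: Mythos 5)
Your proposal is correct and takes essentially the same route as the paper: your compression operator $C_i$ is exactly the paper's shifting step (Definition~\ref{def:shifting}), and your claim (a) --- invariance of the law of $Y_{-i}$ together with the growth of the conditional bias $\abs{N(y_{-i})}/D(y_{-i})$ via $\abs{b+c}\geq\abs{b-c}$ --- is precisely the content of Lemma~\ref{lem:morebiass} and Corollary~\ref{ref:shiftiteration} in a slightly different parametrization (you work directly with $\Pr(Y_i=1\mid Y_{-i})=\alpha\ast\Pr(U_i=1\mid Y_{-i})$, the paper with the posterior of $U_{S,1}$ and its $A,B,C$ decomposition). The only genuine addition is your explicit termination-and-fixed-point argument for claim (b), which the paper asserts without proof inside Definition~\ref{def:shifting}.
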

\vspace{-0.3cm}
\begin{remark}
Theorem 3 in~\cite{ck14} states that among all boolean functions, $I(\mathrm{b}(X^n);Y^n)$ is maximized by functions for which the induced set $\mathrm{b}^{-1}(0)$ is monotone.\footnote{In fact,~\cite[Theorem 3]{ck14} provides a stronger statement about the structure of the induced $\mathrm{b}^{-1}(0)$.} While this statement is closely related to our Lemma~\ref{lem:monsuffice}, it does not imply it, although the proof technique is somewhat similar.
\end{remark}

The proof of Lemma~\ref{lem:monsuffice} is based on applying a procedure called \emph{shifting}~\cite{Kleitman66,Alon83,Frankl83}.

\begin{definition}[Shifting]\label{def:shifting}
For a set of binary vectors $S\subset\ppp{0,1}^n$ the \emph{shifting} procedure is defined as follows. For $i\in[n]$ and $x\in\{0,1\}^n$ write $x-i$ for the vector obtained by setting $x_i=0$, and define
\begin{align}
S_i\triangleq \{x\in S: \ x_i=1, x-i\notin S \}.\nonumber
\end{align}
Find the smallest $i$ such that $S_i\neq\emptyset$. If there is no such $i$ then we are done. Otherwise, replace $S$ with the set $(S\setminus S_i)\cup (S_i-i)$, where $S_i-i\triangleq \{x-i \ : x\in S_i\}$, and repeat. The output of this process is a monotone set, denoted by $S_{\text{shifted}}$, with cardinality $|S_{\text{shifted}}|=|S|$.
\end{definition}

The proof of Lemma~\ref{lem:monsuffice} hinges on the following result.

\begin{lemma}
Let $S\subset\{0,1\}^n$ be some subset of vectors, and $\bar{S}\subset\{0,1\}^n$ be the result of applying one iteration of the shifting procedure, say, on the first coordinate. Let $P_{Y|X}$ be some discrete memoryless channel with binary input, and let $Y^n$ be its output when the input is $U_S$ and ${\bar{Y}}^{n}$ be its output when the input is $U_{\bar{S}}$. For every $\omega\in\m{Y}^{n-1}$ we have that $\Pr(Y_2^n=\omega)=\Pr({\bar{Y}}_2^n=\omega)$, and
\begin{align}
\left|\Pr(U_{\bar{S},1}=1|{\bar{Y}}_2^n=\omega)-\frac{1}{2}\right|\geq \left|\Pr(U_{S,1}=1|Y_2^n=\omega)-\frac{1}{2}\right| .\nonumber
\end{align}
\label{lem:morebiass}
\end{lemma}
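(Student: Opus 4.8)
The plan is to reduce both claims to a coordinate-wise bookkeeping over the pairs of vectors that differ only in the first coordinate, after which the desired inequality collapses to a one-line triangle inequality. First I would index the final $n-1$ coordinates by $\{2,\dots,n\}$, write $s=(s_2,\dots,s_n)\in\{0,1\}^{n-1}$ for a suffix, and let $(c,s)$ denote the vector whose first coordinate is $c\in\{0,1\}$ and whose remaining coordinates are $s$. This partitions $\{0,1\}^n$ into the $2^{n-1}$ pairs $\{(0,s),(1,s)\}$. For each $s$ set $a_s\triangleq\Ind\{(0,s)\in S\}$ and $b_s\triangleq\Ind\{(1,s)\in S\}$, so that $m=|S|=\sum_s(a_s+b_s)$. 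Reading off Definition~\ref{def:shifting} with $i=1$, the only pairs touched by the shift are those with $(1,s)\in S$ and $(0,s)\notin S$ (i.e. $a_s=0,\ b_s=1$); on each such pair $(1,s)$ is replaced by $(0,s)$, and every other pair is left intact. Consequently the post-shift counts satisfy $\bar a_s+\bar b_s=a_s+b_s$ for every $s$, so the shift preserves the per-suffix occupancy.

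For the first (equality) claim I would use that $\bar a_s+\bar b_s=a_s+b_s$ makes the marginal law of the last $n-1$ input coordinates identical under $U_S$ and $U_{\bar S}$: $\Pr(X_2^n=s)=(a_s+b_s)/m=\Pr(\bar X_2^n=s)$. Since the channel is memoryless, $Y_2^n$ depends on the input only through $X_2^n$; writing $q(\omega|s)\triangleq\Pr(Y_2^n=\omega\mid X_2^n=s)=\prod_{i=2}^{n}P_{Y|X}(\omega_i|s_i)\ge 0$, we obtain $\Pr(Y_2^n=\omega)=\sum_s\Pr(X_2^n=s)\,q(\omega|s)$, and the same expression with the (equal) barred marginals, which gives $\Pr(Y_2^n=\omega)=\Pr(\bar Y_2^n=\omega)$ immediately.

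For the second claim the key observation is that the bias of the first coordinate is governed by the signed difference $d_s\triangleq b_s-a_s\in\{-1,0,+1\}$. Using memorylessness as above,
\begin{align}
\Pr(U_{S,1}=1\mid Y_2^n=\omega)-\tfrac12=\frac{\sum_s d_s\,q(\omega|s)}{2\sum_s (a_s+b_s)\,q(\omega|s)},\nonumber
\end{align}
and likewise for $\bar S$ with $d_s$ replaced by $\bar d_s\triangleq\bar b_s-\bar a_s$. Because the shift preserves $a_s+b_s$, the two denominators coincide, both equalling $2m\Pr(Y_2^n=\omega)>0$ for every $\omega$ of positive probability, so it remains only to compare numerators. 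The case analysis above shows the shift acts as $d_s\mapsto\bar d_s=-|d_s|$: a lone upper vector ($d_s=+1$) becomes a lone lower vector ($\bar d_s=-1$), while $d_s\in\{-1,0\}$ is unchanged. Hence, using $q(\omega|s)\ge0$ and the triangle inequality,
\begin{align}
\left|\sum_s \bar d_s\,q(\omega|s)\right|=\sum_s|d_s|\,q(\omega|s)\ge\left|\sum_s d_s\,q(\omega|s)\right|,\nonumber
\end{align}
which, after dividing by the common positive denominator, is exactly the claimed inequality.

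The content of the lemma is therefore almost entirely organizational: the sole analytic step is the triangle inequality, and it succeeds precisely because the transition weights $q(\omega|s)$ are nonnegative while the shift converts every positive contribution $d_s=+1$ into a negative one of equal magnitude, aligning the signs of all numerator terms so they can no longer cancel. The only point needing care is verifying that shifting on the first coordinate acts strictly within each suffix-pair, so that the per-suffix counts, and hence the denominators, really are preserved; this is immediate from the definition of $S_1$, but it is the hinge that lets the denominators cancel and reduces everything to comparing the sums $\sum_s d_s\,q(\omega|s)$.
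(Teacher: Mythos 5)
Your proof is correct and follows essentially the same route as the paper's: your per-suffix indicators $a_s,b_s$ and the sign pattern $d_s\in\{-1,0,+1\}$ are exactly the paper's partition of suffixes into the classes $A$ (both vectors present), $B$ (only the upper), and $C$ (only the lower), and your observation that shifting sends $d_s\mapsto-|d_s|$ is the paper's computation that the posterior bias goes from $a_\omega/2+b_\omega$ to $a_\omega/2$. The only cosmetic difference is the final elementary step, where you invoke the triangle inequality on the signed numerator sums while the paper compares squared distances from $1/2$ and reduces to $b_\omega c_\omega\geq 0$; these are the same inequality in different clothing.
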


\begin{proof}[Proof of Lemma~\ref{lem:morebiass}]
Let $S_2^n$ be the projection of $S$ onto the coordinates $\{2,\ldots,n\}$, and note that the projection of $\bar{S}$ onto these coordinates is also $S_2^n$, as the shifting operations does not effect these coordinates. Consequently, $U_{S,2}^n$ and ${\bar{U}}_{S,2}^n$ have the same distribution, and therefore $Y_2^n$ and ${\bar{Y}}_2^n$ have the same distribution.

Next, for any vector $\omega\in\m{Y}^{n-1}$, we have
\begin{align}
\Pr&(U_{S,1}=1|Y_2^n=\omega)\nonumber\\
&=\sum_{x\in S_2^n} \Pr(U_{S,1}=1,U_{S,2}^n=x|Y_2^n=\omega)\nonumber\\
&=\sum_{x\in S_2^n} \Pr(U_{S,1}=1|U_{S,2}^n=x)\Pr(U_{S,2}^n=x|Y_2^n=\omega).\nonumber
\end{align}
The fact that $U_{S,2}^n$ and ${U}_{\bar{S},2}^{n}$ have the same distribution, implies that $P_{U_{S,2}^n|Y_2^n}=P_{U_{\bar{S},2}^n|{\bar{Y}}_2^n}$, and therefore
\begin{align}
\Pr&(U_{\bar{S},1}=1|{\bar{Y}}_2^n=\omega)\nonumber\\
&=\sum_{x\in S_2^n} \Pr(U_{\bar{S},1}=1|{U}_{\bar{S},2}^n=x)\Pr(U_{S,2}^n=x|Y_2^n=\omega).\nonumber
\end{align}
We partition the set $S_2^n$ into three subsets:
\begin{itemize}
\item $A\triangleq\{x\in S_2^n: \ [0 \ x]\in S, [1 \ x]\in S\}$
\item $B\triangleq\{x\in S_2^n: \ [0 \ x]\notin S, [1 \ x]\in S\}$
\item $C\triangleq\{x\in S_2^n: \ [0 \ x]\in S, [1 \ x]\notin S\}$
\end{itemize}
and we note that
\begin{align}
\Pr(U_{S,1}=1|U_{S,2}^n=x)=
\begin{cases}
1/2 & x\in A\\
1 & x\in B\\
0 & x\in C
\end{cases}.\nonumber
\end{align}
Letting
\begin{align}
a_\omega&\triangleq\Pr(U_{S,2}^n\in A|Y_2^n=\omega),\nonumber\\
b_\omega&\triangleq\Pr(U_{S,2}^n\in B|Y_2^n=\omega), \nonumber\\
c_\omega&\triangleq\Pr(U_{S,2}^n\in C|Y_2^n=\omega),\nonumber
\end{align}
we get
\begin{align}
\Pr(U_{S,1}=1|Y_2^n=\omega)=\frac{a_\omega}{2}+b_\omega\nonumber.
\end{align}
By the definition of the shifting procedure in Definition~\ref{def:shifting},
\begin{align}
\Pr(U_{\bar{S},1}=1|U_{\bar{S},2}^n=x)=
\begin{cases}
1/2 & x\in A\\
0 & x\in B\\
0 & x\in C
\end{cases}.\nonumber
\end{align}
Thus,
\begin{align}
\Pr(U_{\bar{S},1}=1|\bar{Y}_2^n=\omega)=\frac{a_\omega}{2}\nonumber.
\end{align}
We can use this to see that $\Pr(U_{\bar{S},1}=1|{\bar{Y}}_2^n=\omega)$ is more biased than $\Pr(U_{S,1}=1|Y_2^n=\omega)$. Indeed
\begin{align}
&\bigg(\frac{1}{2}-\Pr(U_{\bar{S},1}=1|{\bar{Y}}_2^n=\omega) \bigg)^2\nonumber\\
&\ \ \ \ \ \ -\left(\frac{1}{2}-\Pr(U_{S,1}=1|Y_2^n=\omega) \right)^2\nonumber\\
&=\left(\frac{1}{2}(1-a_\omega) \right)^2-\left(\frac{1}{2}(1-a_\omega)-b_\omega \right)^2\nonumber\\
&=b_\omega(1-a_\omega)-b_\omega^2=b_{\omega}c_\omega\geq 0,\nonumber
\end{align}
as desired.
\end{proof}

\begin{corollary}[Shifting decreases output entropy]
Let $S\subset\{0,1\}^n$ be some subset of vectors, and $\bar{S}\subset\{0,1\}^n$ be the result of applying one iteration of the shifting procedure, say, on the first coordinate. Let $Z^n$ be a sequence of $n$ i.i.d. $\mathrm{Bernoulli}(\alpha)$ random variables, statistically independent of $U_S$ and $U_{\bar{S}}$. Then,
\begin{align}
H(U_{\bar{S}}\oplus Z^n)\leq H(U_{S}\oplus Z^n).
\end{align}
\label{ref:shiftiteration}
\end{corollary}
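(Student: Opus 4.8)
The plan is to combine the chain rule for entropy with the ``more biased'' estimate of Lemma~\ref{lem:morebiass}, specialized to the BSC. Writing $Y^n=(Y_1,Y_2^n)$ and likewise $\bar{Y}^n=(\bar{Y}_1,\bar{Y}_2^n)$, I would first decompose
\begin{align}
H(U_S\oplus Z^n)=H(Y^n)=H(Y_2^n)+H(Y_1|Y_2^n),\nonumber
\end{align}
and analogously $H(U_{\bar{S}}\oplus Z^n)=H(\bar{Y}_2^n)+H(\bar{Y}_1|\bar{Y}_2^n)$. Lemma~\ref{lem:morebiass} already states that $Y_2^n$ and $\bar{Y}_2^n$ are identically distributed, so $H(Y_2^n)=H(\bar{Y}_2^n)$, and it suffices to establish the single-coordinate inequality $H(\bar{Y}_1|\bar{Y}_2^n)\leq H(Y_1|Y_2^n)$.

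Second, I would reduce this conditional-entropy comparison to a pointwise statement for each $\omega\in\{0,1\}^{n-1}$. Since $Y_1=U_{S,1}\oplus Z_1$ with $Z_1\sim\Ber(\alpha)$ independent of $(U_S,Z_2^n)$, the bit $Y_1$ is conditionally independent of $Y_2^n$ given $U_{S,1}$. Hence, conditioned on $\{Y_2^n=\omega\}$, the coordinate $Y_1$ is exactly the output of a BSC($\alpha$) fed by a $\Ber(p_\omega)$ input, where $p_\omega\triangleq\Pr(U_{S,1}=1|Y_2^n=\omega)$. This yields $\Pr(Y_1=1|Y_2^n=\omega)=\alpha+(1-2\alpha)p_\omega$ and therefore $H(Y_1|Y_2^n=\omega)=h\big(\alpha+(1-2\alpha)p_\omega\big)$, and identically $H(\bar{Y}_1|\bar{Y}_2^n=\omega)=h\big(\alpha+(1-2\alpha)\bar{p}_\omega\big)$ with $\bar{p}_\omega\triangleq\Pr(U_{\bar{S},1}=1|\bar{Y}_2^n=\omega)$.

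Third, I would exploit the shape of the binary entropy function. Because the affine map $p\mapsto\alpha+(1-2\alpha)p$ is symmetric about $1/2$ and $\alpha\leq 1/2$, we have $|\alpha+(1-2\alpha)p_\omega-1/2|=(1-2\alpha)\,|p_\omega-1/2|$; since $h$ is symmetric about $1/2$ and strictly decreasing in the distance from $1/2$, the quantity $H(Y_1|Y_2^n=\omega)$ is a non-increasing function of the bias $|p_\omega-1/2|$. Lemma~\ref{lem:morebiass} guarantees precisely that the shifted set produces a larger bias, $|\bar{p}_\omega-1/2|\geq|p_\omega-1/2|$, so $H(\bar{Y}_1|\bar{Y}_2^n=\omega)\leq H(Y_1|Y_2^n=\omega)$ for every $\omega$. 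Averaging over $\omega$ against the common law of $Y_2^n$ and $\bar{Y}_2^n$ gives $H(\bar{Y}_1|\bar{Y}_2^n)\leq H(Y_1|Y_2^n)$, and adding back the equal marginal entropies $H(Y_2^n)=H(\bar{Y}_2^n)$ completes the argument.

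The only genuine step requiring care is the middle one: verifying that conditioning on $Y_2^n$ really collapses the first coordinate into a BSC with a $\Ber(p_\omega)$ input (via the conditional independence $Y_1\perp Y_2^n\mid U_{S,1}$), and translating the bias inequality of Lemma~\ref{lem:morebiass} into an entropy inequality. Once the identity $H(Y_1|Y_2^n=\omega)=h\big(\alpha+(1-2\alpha)p_\omega\big)$ and the monotonicity of $h$ away from $1/2$ are in place, the conclusion is immediate, and I expect no further technical difficulty.
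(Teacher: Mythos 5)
Your proposal is correct and follows essentially the same route as the paper: chain-rule decomposition on the first coordinate, equality of the laws of $Y_2^n$ and $\bar{Y}_2^n$ from Lemma~\ref{lem:morebiass}, and the pointwise comparison $h(\alpha+(1-2\alpha)\bar{p}_\omega)\leq h(\alpha+(1-2\alpha)p_\omega)$, which is exactly the paper's $h(\alpha\ast\beta_\omega)\leq h(\alpha\ast\alpha_\omega)$ justified by the bias inequality. Your middle step spelling out the conditional independence $Y_1\perp Y_2^n\mid U_{S,1}$ is a slightly more explicit rendering of what the paper leaves implicit, but it is the same argument.
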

\vspace{-0.4cm}
\begin{proof}
By the chain rule,
\begin{align}
&H(U_{S}\oplus Z^n)=H(U_{S,2}^n\oplus Z_{2}^n)+H(U_{S,1}\oplus Z_1|U_{S,2}^n\oplus Z_{2}^n),\nonumber
\end{align}
and
\begin{align}
H(U_{\bar{S}}\oplus Z^n)&=H(U_{\bar{S},2}^n\oplus Z_{2}^n)+H(U_{\bar{S},1}\oplus Z_1|U_{\bar{S},2}^n\oplus Z_{2}^n)\nonumber\\
&=H(U_{S,2}^n\oplus Z_{2}^n)+H(U_{\bar{S},1}\oplus Z_1|U_{\bar{S},2}^n\oplus Z_{2}^n)\nonumber
\end{align}
where the last equality follows from the fact that $P_{U_{S,2}^n\oplus Z_{2}^n}=P_{U_{\bar{S},2}^n\oplus Z_{2}^n}$ due to Lemma~\ref{lem:morebiass}. Thus, it suffices to show that
\begin{align}
H(U_{\bar{S},1}\oplus Z_1|U_{\bar{S},2}^n\oplus Z_{2}^n)\leq H(U_{S,1}\oplus Z_1|U_{S,2}^n\oplus Z_{2}^n).\nonumber
\end{align}
For any $\omega\in\{0,1\}^{n-1}$ let $\alpha_{\omega}\triangleq \Pr(U_{S,1}=1|U_{S,2}^n\oplus Z_{2}^n=\omega)$ and $\beta_{\omega}\triangleq \Pr(U_{\bar{S},1}=1|U_{\bar{S},2}^n\oplus Z_{2}^n=\omega)$. Then, we get
\begin{align}
&H(U_{\bar{S},1}\oplus Z_1|U_{\bar{S},2}^n\oplus Z_{2}^n)\nonumber\\
&=\sum_{\omega\in\{0,1\}^{n-1}}\Pr(U_{\bar{S},2}^n\oplus Z_{2}^n=\omega)h\left(\alpha*\beta_{\omega}\right)\nonumber\\
&=\sum_{\omega\in\{0,1\}^{n-1}}\Pr(U_{S,2}^n\oplus Z_{2}^n=\omega)h\left(\alpha*\beta_{\omega}\right)\nonumber\\
&\leq\sum_{\omega\in\{0,1\}^{n-1}}\Pr(U_{S,2}^n\oplus Z_{2}^n=\omega)h\left(\alpha*\alpha_{\omega}\right)\nonumber\\
&=H(U_{S,1}\oplus Z_1|U_{{S},2}^n\oplus Z_{2}^n),
\end{align}
where $a\ast b\triangleq a\cdot(1-b)+(1-a)\cdot b$ for any $a,b\in\left[0,1\right]$, the second equality follows since $P_{U_{S,2}^n\oplus Z_{2}^n}=P_{U_{\bar{S},2}^n\oplus Z_{2}^n}$, and the inequality is because $\beta_\omega$ is more biased than $\alpha_\omega$, by Lemma~\ref{lem:morebiass}.
\end{proof}

Applying Corollary~\ref{ref:shiftiteration} recursively, we see that for any $S\subset\ppp{0,1}^n$ we have
\begin{align}
H\left(U_{S_{\text{shifted}}}\oplus Z^n\right)\leq H(U_S\oplus Z^n).\label{eq:monotoneoptimal}
\end{align}
In fact, it is easy to extend the above argument to show that for any BMS channel with inputs $U_S$ and $U_{S_{\text{shifted}}}$ and corresponding outputs $Y^n$ and $\tilde{Y}^n$, respectively, we get $H(\tilde{Y}^n)\leq H(Y^n)$. Inequality~\eqref{eq:monotoneoptimal} immediately establishes Lemma~\ref{lem:monsuffice}.

We now turn to finding $H_m^n(\alpha)$ for $m=1,2,3,4$.

\begin{proposition}
$H_1^n(\alpha)=n\cdot h(\alpha)$.
\label{prop:H1}
\end{proposition}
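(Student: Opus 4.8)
The plan is short because the case $m=1$ forces the set $S$ to be a singleton, which makes $U_S$ degenerate. First I would fix an arbitrary set $S\subset\ppp{0,1}^n$ with $|S|=1$, say $S=\{s\}$, and observe that the uniform random vector $U_S$ then equals the fixed vector $s$ with probability one. Consequently $U_S\oplus Z^n = s\oplus Z^n$, which is simply the noise vector $Z^n$ shifted by the deterministic constant $s$.

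Next I would exploit the fact that, for a fixed $s$, the map $z^n\mapsto s\oplus z^n$ is a bijection on $\ppp{0,1}^n$, and entropy is invariant under a bijective relabeling of the alphabet. Hence $H(s\oplus Z^n)=H(Z^n)$. Since $Z^n$ consists of $n$ independent $\mathrm{Bernoulli}(\alpha)$ coordinates, the chain rule together with independence gives $H(Z^n)=\sum_{i=1}^n H(Z_i)=n\cdot h(\alpha)$.

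Finally I would note that this value $n\cdot h(\alpha)$ is the same for every singleton, independently of the choice of $s$. Therefore the minimization over all $S$ with $|S|=1$ in the definition~\eqref{eq:Hmdef} of $H_1^n(\alpha)$ collapses to this common value, yielding $H_1^n(\alpha)=n\cdot h(\alpha)$, as claimed.

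There is essentially no obstacle here, so rather than a hard step there is only one point worth stating explicitly: the invariance of entropy under the deterministic shift by $s$. This is exactly what reduces the minimum over all singletons to a single, constant value and makes the statement immediate; no properties of $H_m^n(\alpha)$ established earlier (monotonicity, sufficiency of monotone sets) are needed for this base case.
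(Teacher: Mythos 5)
Your proof is correct and is essentially the paper's argument: for a singleton $S=\{s\}$, the output $s\oplus Z^n$ has the same entropy as $Z^n=n\cdot h(\alpha)$ by invariance under the bijective shift, and this holds uniformly over all choices of $s$. The paper states this in one line; your version merely spells out the same steps in more detail.
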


\begin{proof}
For any vector $u\in\{0,1\}^n$ we have that $H(u\oplus Z^n)=H(Z^n)=n\cdot h(\alpha)$.
\end{proof}

\begin{proposition}
$H_2^n(\alpha)=1+(n-1)\cdot h(\alpha)$.
\label{prop:H2}
\end{proposition}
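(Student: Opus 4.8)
The plan is to leverage Lemma~\ref{lem:monsuffice}, which guarantees that the set minimizing $H(U_S\oplus Z^n)$ in the definition of $H_2^n(\alpha)$ may be taken to be monotone. The whole argument then reduces to two easy steps: first, enumerating the monotone subsets of $\ppp{0,1}^n$ of cardinality $2$, and second, evaluating $H(U_S\oplus Z^n)$ for one such set.

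For the first step, I would observe that any nonempty monotone set must contain the all-zeros vector $\mathbf{0}$, since $\mathbf{0}\leq x$ for every $x$. Thus a monotone $S$ with $|S|=2$ has the form $S=\{\mathbf{0},s\}$, and monotonicity forces every $y\leq s$ to lie in $S$. A vector $s$ of Hamming weight $w$ has exactly $2^w$ vectors below it, so $2^w=2$ and hence $w=1$, i.e. $s=\mathbf{e}_i$ for some $i$. Because the noise $Z^n$ is i.i.d., it is invariant to coordinate permutations, so all these candidates yield the same entropy and I may take $S=\{\mathbf{0},\mathbf{e}_1\}$ without loss of generality.

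For the second step, with $S=\{\mathbf{0},\mathbf{e}_1\}$ the vector $U_S$ equals $(B,0,\ldots,0)$ for $B\sim\Ber(1/2)$ independent of $Z^n$, so $U_S\oplus Z^n=(B\oplus Z_1,Z_2,\ldots,Z_n)$ has mutually independent coordinates. The first coordinate $B\oplus Z_1$ is $\Ber(1/2)$ and contributes $1$ bit, while each of the remaining $n-1$ coordinates contributes $h(\alpha)$, giving $H(U_S\oplus Z^n)=1+(n-1)h(\alpha)$. Combined with Lemma~\ref{lem:monsuffice}, this yields $H_2^n(\alpha)=1+(n-1)h(\alpha)$.

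I do not expect a genuine obstacle here: the heavy lifting was already carried out in the shifting/monotonicity machinery behind Lemma~\ref{lem:monsuffice}, and the only point requiring a moment's care is the weight-one characterization above, which is what pins the minimizer down uniquely up to relabeling. As a sanity check and a self-contained alternative that bypasses the monotone lemma, I would also note that for an arbitrary $2$-set $S$ whose two elements are at Hamming distance $d\geq 1$, translating and relabeling renders $U_S\oplus Z^n$ as $n-d$ independent $\Ber(\alpha)$ coordinates together with a block $W=(B\oplus Z_i)_{i\in D}$ on the $d$ differing coordinates $D$; since $H(B\mid W)\leq H(B\mid B\oplus Z_i)=h(\alpha)$ for any $i\in D$, one obtains $H(W)=1+d\,h(\alpha)-H(B\mid W)\geq 1+(d-1)h(\alpha)$, whence $H(U_S\oplus Z^n)\geq 1+(n-1)h(\alpha)$ with equality precisely at $d=1$.
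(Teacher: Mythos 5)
Your main argument is correct and is essentially the paper's own proof: invoke Lemma~\ref{lem:monsuffice}, observe that the only monotone $2$-set up to coordinate permutation is $\{\mathbf{0},\mathbf{e}_1\}$, and compute its noisy entropy directly; you merely spell out the enumeration step (every nonempty monotone set contains $\mathbf{0}$, and a top element of weight $w$ forces $2^w\leq 2$) that the paper dismisses as ``easy to see.'' Your closing ``sanity check,'' however, is more than a check: it is a complete, self-contained proof that avoids Lemma~\ref{lem:monsuffice} altogether. For an arbitrary pair at Hamming distance $d$, entropy invariance under XOR by a constant reduces to $S=\{\mathbf{0},s\}$ with $s$ of weight $d$, and the identity $H(W)=1+d\,h(\alpha)-H(B\mid W)$ combined with $H(B\mid W)\leq H(B\mid B\oplus Z_i)=h(\alpha)$ yields $H(U_S\oplus Z^n)\geq 1+(n-1)h(\alpha)$ for every $2$-set, with the dictator pair attaining it. This is a genuine simplification for this particular proposition, since the paper's route runs through the shifting machinery (Lemma~\ref{lem:morebiass} and Corollary~\ref{ref:shiftiteration}); note, though, that the paper still needs that machinery for Propositions~\ref{prop:H3} and~\ref{prop:H4}, so the monotone-set lemma cannot be dispensed with globally. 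One cosmetic caveat: your parenthetical ``equality precisely at $d=1$'' fails at the degenerate points $\alpha\in\{0,1/2\}$, where equality also holds for $d\geq 2$; nothing in the proposition depends on this.
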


\begin{proof}
By Lemma~\ref{lem:monsuffice}, it is suffice to minimize $H(U_S\oplus Z^n)$ over $S\in\m{M}_2^n$. It is easy to see that $\m{M}_2^n$ consists of a single set $S^*=\{[1 \ 0 \cdots \ 0],[0 \ 0 \cdots \ 0]\}$, up to permuting the order of coordinates. Thus, direct calculation gives
\begin{align}
H_2^n(\alpha)&=H(U_{S^*}\oplus Z^n)=1+(n-1)\cdot h(\alpha).
\end{align}
\end{proof}
\vspace{-0.3cm}
\begin{proposition}
\begin{align}
H_3^n(\alpha)&=h\left(\frac{1}{3}*\alpha\right)+\left(\frac{2}{3}*\alpha\right)h\left(\frac{1-\alpha^2}{2-\alpha}\right)\nonumber\\
&+\left(\frac{1}{3}*\alpha\right)h\left(\frac{1-\alpha+\alpha^2}{1+\alpha}\right)+(n-2)h(\alpha)\label{eq:H3exact}\\
&\geq h\left(\frac{1}{3}*\alpha\right)+\frac{1}{3}h(\alpha)+\frac{2}{3}+(n-2)h(\alpha)\label{eq:H3bound}
\end{align}
\label{prop:H3}
\end{proposition}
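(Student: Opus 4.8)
The plan is to split the proof into two parts: first establish the exact expression~\eqref{eq:H3exact}, and then obtain the lower bound~\eqref{eq:H3bound} by a short conditioning argument.

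For the exact formula I would invoke Lemma~\ref{lem:monsuffice} to restrict attention to monotone sets of size three. Every monotone set contains the all-zeros vector and is downward closed, so the only way to build one of cardinality three (up to permuting coordinates) is to adjoin two weight-one vectors; thus the unique candidate is $S^{*}=\{[0\,0\,\cdots\,0],[1\,0\,\cdots\,0],[0\,1\,\cdots\,0]\}$. Since all three vectors agree (and equal $0$) on coordinates $3,\ldots,n$, those coordinates of $U_{S^{*}}\oplus Z^n$ are pure noise and contribute $(n-2)h(\alpha)$. Writing $(V_1,V_2)$ for the first two coordinates of $U_{S^{*}}$ --- uniform over $\{(0,0),(1,0),(0,1)\}$ --- and $(W_1,W_2)=(V_1\oplus Z_1,V_2\oplus Z_2)$, I would compute the $2\times 2$ joint law of $(W_1,W_2)$ explicitly and then expand $H(W_1,W_2)=H(W_1)+H(W_2\,|\,W_1)$ by the chain rule. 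One checks $\Pr(W_1=1)=\tfrac{1+\alpha}{3}=\tfrac13*\alpha$, so $H(W_1)=h(\tfrac13*\alpha)$, and that $\Pr(W_2=1\,|\,W_1=0)=\tfrac{1-\alpha+\alpha^2}{2-\alpha}$ and $\Pr(W_2=1\,|\,W_1=1)=\tfrac{\alpha(2-\alpha)}{1+\alpha}$; using $h(p)=h(1-p)$ these reproduce exactly the three entropy terms of~\eqref{eq:H3exact}. This part is purely mechanical.

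The key step is the lower bound~\eqref{eq:H3bound}, and I would obtain it without estimating the messy entropy terms at all. The common term $H(W_1)=h(\tfrac13*\alpha)$ is already in the required form, so it suffices to show $H(W_2\,|\,W_1)\geq \tfrac13 h(\alpha)+\tfrac23$. For this I would reintroduce the input bit $V_1$ and use that conditioning cannot increase entropy:
\begin{align}
H(W_2\,|\,W_1)\geq H(W_2\,|\,W_1,V_1)=H(W_2\,|\,V_1),\nonumber
\end{align}
where the equality holds because $Z_1$ is independent of $(V_1,V_2,Z_2)$, so that $W_1=V_1\oplus Z_1$ and $W_2=V_2\oplus Z_2$ are conditionally independent given $V_1$. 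Finally $H(W_2\,|\,V_1)$ is immediate: given $V_1=1$ the only admissible point is $(1,0)$, whence $W_2=Z_2$ and the conditional entropy is $h(\alpha)$; given $V_1=0$ the point is uniform over $\{(0,0),(0,1)\}$, so $V_2\sim\Ber(1/2)$ and $W_2\sim\Ber(1/2)$ with conditional entropy $1$. Since $\Pr(V_1=1)=1/3$, this yields $H(W_2\,|\,V_1)=\tfrac13 h(\alpha)+\tfrac23$, establishing~\eqref{eq:H3bound}.

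The only real subtlety --- the place where a naive approach fails --- is the temptation to prove~\eqref{eq:H3bound} directly as a scalar inequality in $\alpha$. That inequality is tight at both $\alpha=0$ and $\alpha=1/2$, so crude concavity-based bounds on $h$ (e.g. the tent bound $h(p)\geq 2\min(p,1-p)$) are too lossy and in fact point in the wrong direction, while a brute-force second-derivative analysis of the difference of the three entropy terms would be unpleasant. Recognizing that the target $\tfrac13 h(\alpha)+\tfrac23$ is exactly $H(W_2\,|\,V_1)$ is what collapses the problem to the one-line ``conditioning reduces entropy'' argument above; I would take care to justify the conditional-independence equality rigorously, since that is where all the content of the bound sits.
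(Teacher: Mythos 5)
Your proof is correct and takes essentially the same route as the paper: reduction via Lemma~\ref{lem:monsuffice} to the unique (up to coordinate permutation) monotone set of size three, direct computation of \eqref{eq:H3exact}, and a chain-rule argument for \eqref{eq:H3bound} in which conditioning on the noisy bit $W_1$ is replaced by conditioning on the clean input bit $V_1$ --- exactly the paper's inequality $H(U_{S^*_2}\oplus Z_2\,|\,U_{S^*_1}\oplus Z_1)\geq H(U_{S^*_2}\oplus Z_2\,|\,U_{S^*_1})$. Your explicit justification of the conditional-independence step (that $W_1$ and $W_2$ are independent given $V_1$ because $Z_1$ is independent of everything else) simply spells out what the paper leaves implicit.
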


\begin{proof}
By Lemma~\ref{lem:monsuffice}, it is suffice to minimize $H(U_S\oplus Z^n)$ over $S\in\m{M}_3^n$. It is easy to see that $\m{M}_3^n$ consists of a single set $S^*=\{[1 \ 0 \ 0  \ \cdots \ 0],[0 \ 1 \ 0 \ \cdots \ 0],[0 \ 0 \ 0 \ \cdots \ 0]\}$, up to permuting the order of coordinates. Thus,~\eqref{eq:H3exact} is obtained by direct calculation of $H(U_{S^*}\oplus Z^n)$. To obtain the lower bound~\eqref{eq:H3bound} we write
\begin{align}
&H_3^n(\alpha)=H(U_{S^*}\oplus Z^n)\nonumber\\
&=H(U_{S^*_1}\oplus Z_1)+H(U_{S^*_2}\oplus Z_2|U_{S^*_1}\oplus Z_1)+H\left( Z_3^n\right)\nonumber\\
&\geq  H(U_{S^*_1}\oplus Z_1)+H(U_{S^*_2}\oplus Z_2|U_{S^*_1})+H\left( Z_3^n\right)\nonumber\\
&=h\left(\frac{1}{3}*\alpha\right)+\frac{1}{3}h(\alpha)+\frac{2}{3}+(n-2)h(\alpha).\nonumber
\end{align}
\end{proof}

\begin{proposition}
$H_4^n(\alpha)=2+(n-2)\cdot h(\alpha)$.
\label{prop:H4}
\end{proposition}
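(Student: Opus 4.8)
The plan is to follow the template of Propositions~\ref{prop:H2} and~\ref{prop:H3}: invoke Lemma~\ref{lem:monsuffice} to restrict the minimization in~\eqref{eq:Hmdef} to monotone sets, enumerate the monotone sets of size $4$ up to a permutation of coordinates, evaluate $H(U_S\oplus Z^n)$ on each, and take the smallest. First I would argue that $\m{M}_4^n$ contains exactly two sets up to relabeling coordinates. A monotone (down-closed) set is determined by its maximal elements. If there is a single maximal element $x$, then $S=\ppp{y:y\leq x}$ has size $2^{|x|}$, so $|S|=4$ forces $|x|=2$ and $S$ is a two-dimensional subcube, say $S_{\square}=\ppp{[0\,0\,0\cdots],[1\,0\,0\cdots],[0\,1\,0\cdots],[1\,1\,0\cdots]}$. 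Otherwise the maximal elements form an antichain of at least two incomparable vectors, and a short case check on their weights shows the only way to reach size exactly $4$ is three weight-one maximal elements, giving the ``star'' $S_{\ast}=\ppp{[0\,0\,0\cdots],[1\,0\,0\cdots],[0\,1\,0\cdots],[0\,0\,1\cdots]}$.

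Next I would compute the two candidates. The subcube $S_{\square}$ is a product set: its first two coordinates are uniform and independent while coordinates $3,\ldots,n$ are frozen to $0$. Hence $U_{S_{\square}}\oplus Z^n$ consists of two independent fair bits in positions $1,2$ and independent $\Ber(\alpha)$ noise in positions $3,\ldots,n$, so $H(U_{S_{\square}}\oplus Z^n)=2+(n-2)h(\alpha)$ exactly, matching the claimed value. For the star, coordinates $4,\ldots,n$ are again frozen and contribute $(n-3)h(\alpha)$ independently, so it remains to compare the three-coordinate quantity $G(\alpha)\triangleq H(V_1,V_2,V_3)$, where $(V_1,V_2,V_3)=(U_1,U_2,U_3)\oplus(Z_1,Z_2,Z_3)$ and $(U_1,U_2,U_3)$ is uniform on $\ppp{000,100,010,001}$. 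Since $H_4^n(\alpha)=\min\ppp{2+(n-2)h(\alpha),\,G(\alpha)+(n-3)h(\alpha)}$, the proposition reduces to the single-variable inequality $G(\alpha)\geq 2+h(\alpha)$.

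The cleanest way to phrase the last step is via equivocation. Writing $H(U,V^3)$ two ways, where $U\triangleq(U_1,U_2,U_3)$ and $V^3\triangleq(V_1,V_2,V_3)$, gives $G(\alpha)=H(U)+H(V^3|U)-H(U|V^3)=2+3h(\alpha)-H(U|V^3)$, so $G(\alpha)\geq 2+h(\alpha)$ is equivalent to $H(U|V^3)\leq 2h(\alpha)$; that is, the posterior equivocation of the four-point star constellation sent over the BSC is no larger than that of the subcube, which equals exactly $2h(\alpha)$.

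I expect this one-variable inequality to be the main obstacle. The generic Mrs.-Gerber-type bound and the naive chain-rule bound obtained by dropping conditioning both turn out to be too weak, dipping below $2+h(\alpha)$ for intermediate $\alpha$, so a genuine computation is needed. Concretely I would write out the posterior distribution of $U$ given each received $v\in\ppp{0,1}^3$, whose weights are proportional to $\p{\alpha/(1-\alpha)}^{d(u,v)}$ in the Hamming distance $d(u,v)$ and fall into a few symmetry classes, form $\Phi(\alpha)\triangleq G(\alpha)-2-h(\alpha)$, note the boundary equalities $\Phi(0)=\Phi(1/2)=0$, and close the gap for $\alpha\in(0,1/2)$ by a convexity or sign analysis of $\Phi$, for example by showing that $\Phi$ is concave on $[0,1/2]$, so that its vanishing at both endpoints forces $\Phi\geq 0$ throughout. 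Establishing this concavity, or an equivalent single-crossing property of $\Phi'$, is where the real work lies.
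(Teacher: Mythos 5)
Your enumeration of $\m{M}_4^n$ and the subcube computation coincide with the paper's, and your reduction to the single inequality $G(\alpha)\geq 2+h(\alpha)$ for the star (the paper's Hamming ball $\m{B}$) is sound. But that inequality is the entire content of the proposition, and you leave it unproven --- you say yourself that this ``is where the real work lies.'' Worse, the concrete route you suggest fails: $\Phi(\alpha)=G(\alpha)-2-h(\alpha)$ is \emph{not} concave on $[0,1/2]$. A second-order expansion around $\alpha=\frac12$ (the Walsh coefficients of the star at weight one equal $\frac12$ and are damped by $(1-2\alpha)$ through the BSC, so $\chi^2$ against uniform is $3(1/2-\alpha)^2$ to leading order, while $h$ loses $\frac{2(1/2-\alpha)^2}{\ln 2}$) gives
\begin{align}
\Phi\p{\tfrac{1}{2}-\epsilon}=\frac{\epsilon^2}{2\ln 2}+O(\epsilon^4),\nonumber
\end{align}
which is strictly convex in $\alpha$ near the right endpoint, so the ``concave and vanishing at both endpoints'' argument is unavailable. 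The fallback you mention (a single-crossing property of $\Phi'$) may be true, but establishing it from the explicit eight-point output distribution is exactly the messy computation you were hoping to avoid, and you do not carry it out.

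The irony is that the tool you dismissed is the one that closes the proof. You are right that the fully coordinate-wise chain-rule bound $H(V^3)\geq H(V_1)+H(V_2|U_1)+H(V_3|U_1,U_2)$ is too weak (it does dip below $2+h(\alpha)$ at intermediate $\alpha$), but the paper uses a $2+1$ split instead: compute the first two noisy coordinates of the star \emph{jointly and exactly},
\begin{align}
H(V_1,V_2)=\frac{3}{2}+\frac{h(\alpha)}{2},\nonumber
\end{align}
(the pair $(V_1,V_2)$ takes the values $(0,0),(1,1),(1,0),(0,1)$ with probabilities $\frac{1-\alpha}{2},\frac{\alpha}{2},\frac14,\frac14$), and only then drop conditioning on the last coordinate,
\begin{align}
H(V_3|V_1,V_2)\geq H(V_3|U_1,U_2)=\frac{1}{2}+\frac{h(\alpha)}{2},\nonumber
\end{align}
which is legitimate since $V_3$ is conditionally independent of $(Z_1,Z_2)$ given $(U_1,U_2)$, and $(V_1,V_2)$ is a function of $(U_1,U_2,Z_1,Z_2)$. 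The two terms sum to exactly $2+h(\alpha)$, so no residual one-variable inequality remains. As written, your proposal identifies the right target but does not prove it, and its main suggested mechanism provably cannot work.
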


\begin{proof}
By Lemma~\ref{lem:monsuffice}, it is suffice to minimize $H(U_S\oplus Z^n)$ over $S\in\m{M}_4^n$. It is easy to see that $\m{M}_4^n$ consists of two sets
\begin{align}
\m{C}\triangleq\{&[1 \ 1 \ 0  \ \cdots \ 0],[1 \ 0 \ 0 \ \cdots \ 0],\nonumber\\
&[0 \ 1 \ 0 \ \cdots \ 0],[0 \ 0 \ 0 \ \cdots \ 0]\},\nonumber\\
\m{B}\triangleq\{&[1 \ 0 \ 0  \ 0 \ \cdots \ 0],[0 \ 1 \ 0 \ 0 \ \cdots \ 0],\nonumber\\
&[0 \ 0 \  1 \ 0 \ \cdots \ 0],[0 \ 0 \ 0 \ 0  \ \cdots \ 0]\},\nonumber
\end{align}
up to permuting the order of coordinates. In particular, $\m{C}$ is the $2$-dimensional cube padded by $(n-2)$ zeros, whereas $\m{B}$ is the $3$-dimensional Hamming ball of radius $1$, padded by $n-3$ zeros. Thus,
\begin{align}
H_4^n(\alpha)=\min\left\{H\left(U_{\m{C}}\oplus Z^n\right),H\left(U_{\m{B}}\oplus Z^n\right)\right\}.\nonumber
\end{align}
It is easy to verify that $H\left(U_{\m{C}}\oplus Z^n\right)=2+(n-2)\cdot h(\alpha)$. We show that $H\left(U_{\m{B}}\oplus Z^n\right)\geq 2+(n-2)\cdot h(\alpha)$. Indeed,
\begin{align}
&H\left(U_{\m{B}}\oplus Z^n\right)\nonumber\\
&=H\left(U^2_{\m{B},1}\oplus Z_1^2\right)+H\left(U_{\m{B},3}\oplus Z_3|U^2_{\m{B},1}\oplus Z_1^2\right)+H(Z_4^n)\nonumber\\
&\geq H\left(U^2_{\m{B},1}\oplus Z_1^2\right)+H\left(U_{\m{B},3}\oplus Z_3|U^2_{\m{B},1}\right)+(n-3)\cdot h(\alpha)\nonumber\\
&=H\left(U^2_{\m{B},1}\oplus Z_1^2\right)+\frac{1}{2}+\frac{h(\alpha)}{2} +(n-3)\cdot h(\alpha).\label{eq:HBbound}
\end{align}
Direct calculation gives
\begin{align}
H\left(U^2_{\m{B},1}\oplus Z_1^2\right)=\frac{3}{2}+\frac{h(\alpha)}{2},
\end{align}
which together with~\eqref{eq:HBbound} shows that $H\left(U_{\m{B}}\oplus Z^n\right)\geq 2+(n-2)\cdot h(\alpha)$.
\end{proof}

We are now in a position to prove Lemma~\ref{lem:H2bound}.

\begin{proof}[Proof of Lemma~\ref{lem:H2bound}]
For any $m\geq 4$ we have that $H^n_m(\alpha)\geq H^n_4(\alpha)>H_1^n(\alpha)$, which implies that
\begin{align}
&\frac{m-2}{2m-2}H_1^n(\alpha)+\frac{m}{2m-2}H_m^n(\alpha)\geq \frac{H_1^n(\alpha)+H_m^n(\alpha)}{2}\nonumber\\
&\ \ \ \geq \frac{H_1^n(\alpha)+H_4^n(\alpha)}{2}=1+(n-1)\cdot h(\alpha)=H_2^n(\alpha).\nonumber
\end{align}
It then remains to verify~\eqref{H2Boundprove} for $m=3$. Using the lower bound~\eqref{eq:H3bound} for $H_3^n(\alpha)$, it suffices to verify that
\begin{align}
\frac{1}{4}nh(\alpha)&+\frac{3}{4}\pp{h\p{\frac{1}{3}*\alpha}+\frac{1}{3}h(\alpha)+\frac{2}{3}+(n-2)h(\alpha)}\nonumber\\
&\geq1+(n-1)h(\alpha),
\end{align}
which is equivalent to
\begin{align}
3\cdot h\p{(1/3)*\alpha}-2-h(\alpha)\geq0.
\end{align}
Let $g(\alpha)\triangleq3\cdot h\p{\frac{1}{3}*\alpha}-2-h(\alpha)$. It is easy to check that $g(0)>0$ and that $g(1/2)=0$. Thus, it suffices to show that $g(\alpha)$ is monotonically decreasing as a function of $\alpha$, namely, that $\mathrm{d}g(\alpha)/\mathrm{d}\alpha<0$, for any $\alpha\in(0,1/2)$. We have
\begin{align}
\frac{\mathrm{d}}{\mathrm{d}\alpha}g(\alpha) &= -\log_2\p{\frac{\frac{1}{3}\ast\alpha}{\frac{2}{3}\ast\alpha}}+\log_2\p{\frac{\alpha}{1-\alpha}}\\
& = \log_2\p{\frac{2\alpha-\alpha^2}{1-\alpha^2}},
\end{align}
which is negative for all $\alpha\in(0,1/2)$. 
\end{proof}

%
%
\vspace{-0.16cm}
\section*{Acknowledgment}
The authors are grateful to Yury Polyanskiy, Shlomo Shamai (Shitz), Ofer Shayevitz, and Omri Weinstein, for many discussions that helped prompt this work.

\bibliographystyle{IEEEtran}
\bibliography{OrBib2}

\end{document}